
\documentclass[openacc]{rstransa}


\newtheorem{theorem}{\bf Theorem}[section]
\newtheorem{lemma}[theorem]{\bf Lemma}
\newtheorem{definition}[theorem]{\bf Definition}
\newtheorem{remark}[theorem]{\bf Remark}

\newtheorem{corollary}[theorem]{\bf Corollary}


\newcommand{\bomega}{\mbox{\boldmath{$\omega$}}}

\newcommand{\etab}{\mbox{\boldmath{$\eta$}}}

\newcommand{\eref}[1]{{(\ref{#1})}}
\newcommand{\mbf}[1]{{\mathbf{#1}}}
\newcommand{\Lie}[1]{\mathop{\mathcal{L}}_{\scriptsize {#1}}}

\jname{rsta}
\Journal{Phil. Trans. R. Soc}

\begin{document}

\title{On the role of continuous symmetries in the solution of the 3D Euler fluid equations and related models}

\author{
Miguel D. Bustamante}

\address{School of Mathematics and Statistics, University College Dublin, Belfield, Dublin 4, Ireland}

\subject{Fluid Mechanics, Differential Equations, Applied Mathematics}

\keywords{3D Euler fluid equations, Exact solutions, Finite-time singularities, Lie algebras, Infinitesimal symmetries}

\corres{Miguel D. Bustamante\\
\email{miguel.bustamante@ucd.ie}}

\begin{abstract}
We review and apply the continuous symmetry approach to find the solution of the 3D Euler fluid equations in several instances of interest, via the construction of constants of motion and infinitesimal symmetries, without recourse to Noether's theorem. We show that the vorticity field is a symmetry of the flow, so if the flow admits another symmetry then a Lie algebra of new symmetries can be constructed. For steady Euler flows this leads directly to the distinction of (non-)Beltrami flows: an example is given where the topology of the spatial manifold determines whether extra symmetries can be constructed. Next, we study the stagnation-point-type exact solution of the 3D Euler fluid equations introduced by Gibbon \emph{et al.} (\emph{Physica D}, vol.~132, 1999, pp.~497--510) along with a one-parameter generalisation of it introduced by Mulungye \emph{et al.} (\emph{J. Fluid Mech.}, vol.~771, 2015, pp.~468--502). Applying the symmetry approach to these models allows for the explicit integration of the fields along pathlines, revealing a fine structure of blowup for the vorticity, its stretching rate, and the back-to-labels map, depending on the value of the free parameter and on the initial conditions. Finally, we produce explicit blowup exponents and prefactors for a generic type of initial conditions.
\end{abstract}

\begin{fmtext}
\section{Introduction}
\label{sec:Intro}
Consider a smooth (i.e., of class $C^\infty$) 3D vector field 
\begin{eqnarray}
\nonumber
\mbf{u}: \Omega \times [0,T) \!\!\!\!&\longrightarrow & \!\!\!\!\mathbb{R}^3\\
\nonumber
(\mbf{x},t) \!\!\!\!&\longmapsto & \!\!\!\!\mbf{u}(\mbf{x},t) := (u^1(\mbf{x},t), u^2(\mbf{x},t), u^3(\mbf{x},t)),
\end{eqnarray}   
where $T>0$ and $\mbf{x}:=(x^1,x^2,x^3)$ denote local Cartesian coordinates of a point in the spatial domain $\Omega.$ It is customary to use superscripts for the components of coordinates \, and \, vector\, \, fields, \, as\, small \,displacements \break
\end{fmtext}
%

\maketitle
\noindent transform as components of contravariant vectors under general coordinate transformations. In this work we will set either $\Omega=\mathbb{T}^3$, or $\Omega=\mathbb{R}^2 \times \mathbb{T}$,  or $\Omega=\mathbb{T}^2 \times \mathbb{R}$, where $\mathbb{T}:=[0,2\pi)$ is the torus. In what follows, we will assume for simplicity that $\mbf{u}$ is divergence-free:
\begin{equation}
\nonumber
\nabla \cdot \mbf{u} := \frac{\partial u^1}{\partial x^1}+ \frac{\partial u^2}{\partial x^2} + \frac{\partial  u^3}{\partial x^3} = 0\,, \quad \text{for \,\, all} \quad \mbf{x} \in \Omega, \quad t \in [0,T).
\end{equation} 
The given vector field $\mbf{u}$ defines a 3D flow in $\Omega$ by the following non-linear, non-autonomous system of ordinary differential equations for the variables $\mbf{X}(t):=(X^1(t),X^2(t),X^3(t))$:
\begin{equation}
\label{eq:ODE}
 \dot{X}^1 = u^1(\mbf{X},t)\,, \qquad \quad
 \dot{X}^2 = u^2(\mbf{X},t)\,, \qquad \quad
 \dot{X}^3 = u^3(\mbf{X},t)\,, \qquad \quad t \in [0,T)\,.
\end{equation}
Solutions to this system subject to initial conditions $X^1(0)=X_0, X^2(0)=Y_0, X^3(0)=Z_0$ are known as pathlines, or characteristics, of $\mbf{u}$.  As $\mbf{u}$ is of class $C^\infty$ for $t \in [0,T)$, it is then Lipschitz, so pathlines with different initial conditions do not cross, but this may be violated if at some time $t\geq T$ the velocity field becomes non-regular.

In physical terms one could interpret the field $\mbf{u}$ as a velocity field for some incompressible fluid. The pathlines would then be the trajectories of massless particles.

In this paper we will apply infinitesimal symmetry methods, valid for an arbitrary smooth vector field $\mbf{u}$, for solving system \eqref{eq:ODE}, in order to establish non-trivial results when $\mbf{u}$ is a fluid velocity field satisfying the 3D Euler fluid equations or related models.

Section \ref{sec:2} is devoted to a review of the infinitesimal symmetry approach for general 3D vector fields, and the construction of constants of motion starting from symmetries, without using Noether's theorem. Section \ref{sec:3} is devoted to the application of this general work to the solution of the 3D Euler fluid equations. In it, we will look at general results, then provide an application to steady flows and finally an application to the stagnation-point like solution introduced in \cite{Gibbon1999497}. Section \ref{sec:4} is devoted to the main application, namely the $1$-parameter family of models introduced in \cite{mulungye2015lambda_m3/2} that generalise the case of \cite{Gibbon1999497}, containing it as a particular case. We construct constants of  motion and use them to solve for the main quantities of interest along pathlines, establishing a formula for the blowup time, and conditions on the blowup of the main fields, depending on the initial data and on the value of the model's parameter. Finally, conclusions are given in section \ref{sec:5}.

\section{The continuous symmetry approach}
\label{sec:2}
In this section we will present a simple exposition of the main tools of interest in the solution of the ODE system \eqref{eq:ODE}. These tools are ultimately based on the formalism of tensor bundles, where tensorial objects of various ranks are transported by the vector field $\mbf{u}$. We will avoid the use of this formalism in this work, as it tends to obscure the exposition of the applications. However, we will introduce (at a formal level) the main concept pertaining to the fundamental approach: the Lie derivative, which is extremely useful when proving the most relevant results.

\subsection{Basic tools: Constants of motion and Symmetries}

The most basic tool for solving the ODE system \eqref{eq:ODE} is the constant of motion. 

\begin{definition}[Constant of motion]
\label{def:CoM}
A smooth scalar function of position and time \mbox{$C: \Omega \times [0,T) \rightarrow \mathbb{R}$}
is a constant of motion for system \eref{eq:ODE} if it satisfies the linear partial differential equation
\begin{equation}
\label{eq:CoM}
\frac{\partial C}{\partial t} + (\mbf{u} \cdot \nabla) C = 0\,, \quad \text{for \,\, all} \quad \mbf{x} \in \Omega, \quad t \in [0,T),
\end{equation}
where $\mbf{u}\cdot \nabla := u^1\frac{\partial}{\partial x^1} + u^2\frac{\partial}{\partial x^2} + u^3\frac{\partial}{\partial x^3}$. In words, a constant of motion takes a fixed numerical value along a given pathline of the vector field $\mbf{u}$.
\end{definition}

\begin{remark}
Notice that if one finds a constant of motion with a nontrivial dependence on the space and time variables, then in principle the number of independent variables of system \eqref{eq:ODE} can be reduced by one unit via the formula
$C(\mbf{X}(t),t) = C(\mbf{X}(0),0)$, where $\mbf{X}(t)$ is a pathline. By iterating this process, when enough functionally independent solutions to this equation exist, one can in principle solve the system \eqref{eq:ODE} explicitly. The difficulty now lies of course in solving the linear partial differential equation \eqref{eq:CoM}. 
\end{remark}

The next basic tool in terms of complexity is the infinitesimal symmetry. Let us first recall Sophus Lie's transformation theory \cite{lie1891vorlesungen}, restricting the discussion to the so-called contemporaneous case, namely when the time is not transformed \cite[Chart A.6, Definition 3]{santilli1982foundations}.

\begin{definition}[Finite contemporaneous symmetry]
\label{def:finite_symm}
An invertible mapping defined continuously for each $t \in [0,T)$ 
\begin{eqnarray}
\nonumber
f_t: \Omega &\longrightarrow &\Omega\,, \qquad t \in [0,T),\\
\nonumber
(x^1,x^2,x^3) &\longmapsto & (\widetilde{x}^1, \widetilde{x}^2, \widetilde{x}^3) := f_t(x^1,x^2,x^3)\,,
\end{eqnarray}   
is called a finite contemporaneous symmetry for system \eqref{eq:ODE} if, under the mapping $\mbf{\widetilde{X}}(t):=(\widetilde{X}^1(t),\widetilde{X}^2(t),\widetilde{X}^3(t)) = f_t(X^1(t), X^2(t), X^3(t))$, system \eqref{eq:ODE} maps to the  system in the same form,
\begin{equation}
\label{eq:ODE_mapped}
 \dot{\widetilde{X}}{}^1 = u^1(\mbf{\widetilde{X}},t)\,, \qquad \quad
 \dot{\widetilde{X}}{}^2 = u^2(\mbf{\widetilde{X}},t)\,, \qquad \quad
 \dot{\widetilde{X}}{}^3 = u^3(\mbf{\widetilde{X}},t)\,, \qquad \quad t \in [0,T)\,.
\end{equation}
It is crucial in this definition that the form be maintained: the exact same functions $u^1, u^2, u^3$ are used in \eqref{eq:ODE} and \eqref{eq:ODE_mapped}. In words, a finite contemporaneous symmetry maps pathlines of $\mbf{u}$ into pathlines of $\mbf{u}$.
\end{definition}

The latter word definition is key to the utility of finite symmetries for solving the ODE system \eqref{eq:ODE}: once a solution is known, one can use the finite symmetry to construct new solutions.

It is clear from this definition that the set of all finite symmetry transformations for system \eqref{eq:ODE} define a group $G$, whose product rule is induced by the composition operation of functions:
\begin{eqnarray}
\nonumber
 G \times G &\longrightarrow & G\\
\nonumber
(f,g) &\longmapsto & h   \qquad \text{such \,\,that} \qquad h_t := f_t \circ g_t\,, \qquad t \in [0,T)\,.
\end{eqnarray}   
Infinitesimal symmetry transformations are simply near-identity versions of finite symmetry transformations (up to first order), and correspondingly they form an algebra, known as Lie algebra.  While Lie introduced the main techniques to construct infinitesimal symmetry transformations, he did not focus on the problem of finding infinitesimal symmetries for an arbitrary vector field $\mbf{u}$, which is what we will do now. The main equations were first introduced in \cite{hojman1985s}.  A modern account is given by Hojman in \cite{hojman1996non} (see also \cite{olver2000applications}). Because of its importance in what follows, we re-derive the equations determining contemporaneous infinitesimal symmetries of system \eqref{eq:ODE}. For simplicity, from here on we will restrict the discussion to the subgroup of volume-preserving finite contemporaneous symmetries, namely mappings satisfying the additional constraint $\det \frac{\partial({\widetilde{X}}{}^1, {\widetilde{X}}{}^2, {\widetilde{X}}{}^3)}{\partial(X^1, X^2, X^3)} = 1$. In terms of infinitesimal contemporaneous symmetries defined below, this extra restriction translates into the divergence-free condition $\nabla \cdot \etab = 0$.

\begin{definition}[Infinitesimal contemporaneous symmetry] 
\label{def:symm}
Consider a one-parameter family of volume-preserving finite contemporaneous symmetries $f_{t, \epsilon}$ depending on the continuous parameter $\epsilon \in \mathbb{R}$, and such that $f_{t, \epsilon=0}$ is the identity transformation. Then, the smooth divergence-free vector field 
\begin{eqnarray}
\nonumber
\etab: \Omega \times [0,T) &\longrightarrow & \mathbb{R}^3\\
\label{eq:eta_defn}
(\mbf{x},t) &\longmapsto & \etab(\mbf{x},t) := (\eta^1(\mbf{x},t), \eta^2(\mbf{x},t), \eta^3(\mbf{x},t))\,,\\
\label{eq:eta_defn_incomp}
&&\nabla \cdot \etab  = 0 \quad  \text{for \,\, all} \quad \mbf{x} \in \Omega, \quad t \in [0,T),
\end{eqnarray} 
defined by
\begin{equation}
\label{eq:etab_from_X_eps}
\etab(\mbf{x},t) = \left.\frac{\partial  f_{t, \epsilon}(\mbf{x})}{\partial \epsilon}\right|_{\epsilon=0}\,,
\end{equation}
is called an infinitesimal contemporaneous symmetry for system \eqref{eq:ODE}.
\end{definition}

A geometrical interpretation follows directly from the above definition: the functions 
\begin{equation}
\label{eq:epsilon_mapping}
\mbf{\widetilde{X}}(t) = \mbf{X}(t) + \epsilon\, \etab(\mbf{X}(t),t)\,,
\end{equation}
where $\mbf{\widetilde{X}}(t)=(\widetilde{X}^1(t), \widetilde{X}^2(t), \widetilde{X}^3(t))$, satisfy \eqref{eq:ODE_mapped} up to and including terms of order ${\mathcal{O}}(\epsilon)$ whenever $\mbf{X}(t)=(X^1(t),X^2(t),X^3(t))$ satisfy \eqref{eq:ODE}.
In words, an infinitesimal contemporaneous symmetry maps pathlines of $\mbf{u}$ into nearby pathlines of $\mbf{u}$.

We enunciate 
 our main theorem regarding infinitesimal symmetries:

\begin{theorem}
\label{thm:symm}
A smooth divergence-free vector field $\etab$ defined as in equations \eqref{eq:eta_defn}--\eqref{eq:eta_defn_incomp} is an infinitesimal contemporaneous symmetry if, and only if, it satisfies the following linear partial differential equation:
\begin{equation}
\label{eq:symm}
\frac{\partial \etab}{\partial t} + (\mbf{u} \cdot \nabla) \etab - (\etab \cdot \nabla) \mbf{u} = \mbf{0}\,,\quad  \text{for \,\, all} \quad \mbf{x} \in \Omega, \quad t \in [0,T),
\end{equation}
where $\mbf{0}$ denotes the vector whose components are all equal to zero.
\end{theorem}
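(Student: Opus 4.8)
The plan is to derive \eqref{eq:symm} directly from Definition \ref{def:symm} by substituting the perturbed pathline \eqref{eq:epsilon_mapping} into the mapped system \eqref{eq:ODE_mapped} and collecting terms of order $\epsilon$. First I would differentiate $\widetilde{X}^i(t) = X^i(t) + \epsilon\, \eta^i(\mbf{X}(t),t)$ with respect to $t$, using the chain rule on the composite term: $\dot{\widetilde{X}}{}^i = \dot{X}^i + \epsilon\bigl(\frac{\partial \eta^i}{\partial t} + \dot{X}^j \frac{\partial \eta^i}{\partial x^j}\bigr)$, and then replace $\dot{X}^j$ by $u^j(\mbf{X},t)$ using \eqref{eq:ODE}. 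This gives the left-hand side of \eqref{eq:ODE_mapped} as $u^i(\mbf{X},t) + \epsilon\bigl(\frac{\partial \eta^i}{\partial t} + (\mbf{u}\cdot\nabla)\eta^i\bigr)$ evaluated at $\mbf{X}(t)$.

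Next I would Taylor-expand the right-hand side of \eqref{eq:ODE_mapped}, namely $u^i(\mbf{\widetilde{X}},t) = u^i(\mbf{X} + \epsilon\,\etab,t) = u^i(\mbf{X},t) + \epsilon\, \eta^j \frac{\partial u^i}{\partial x^j}(\mbf{X},t) + \mathcal{O}(\epsilon^2) = u^i(\mbf{X},t) + \epsilon\,(\etab\cdot\nabla)u^i + \mathcal{O}(\epsilon^2)$. Equating the two sides and cancelling the zeroth-order term $u^i(\mbf{X},t)$ (which holds precisely because $\mbf{X}$ is a pathline), the order-$\epsilon$ balance reads $\frac{\partial \eta^i}{\partial t} + (\mbf{u}\cdot\nabla)\eta^i - (\etab\cdot\nabla)u^i = 0$ along the pathline $\mbf{X}(t)$, for each $i=1,2,3$; in vector form this is exactly \eqref{eq:symm} restricted to points of the pathline.

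The remaining point — and the only one requiring a word of care — is upgrading ``along every pathline'' to ``for all $\mbf{x}\in\Omega$ and all $t\in[0,T)$.'' This follows because through any point $(\mbf{x}_0,t_0)\in\Omega\times[0,T)$ there passes a pathline of $\mbf{u}$ (solve \eqref{eq:ODE} with $\mbf{X}(t_0)=\mbf{x}_0$, which exists and is unique by smoothness of $\mbf{u}$), so the pointwise identity at $(\mbf{x}_0,t_0)$ is a special case of the pathline identity; conversely, an identity holding for all $\mbf{x},t$ certainly holds along pathlines, which gives the ``only if'' direction and makes the equivalence an ``if and only if''. I would also remark that the divergence-free hypothesis on $\etab$ plays no role in this computation: it is imposed separately so that $\etab$ stays within the class of fields considered, and indeed one checks that if $\mbf{u}$ and $\etab$ are both divergence-free then \eqref{eq:symm} propagates $\nabla\cdot\etab=0$ in time, consistent with \eqref{eq:eta_defn_incomp}.

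The main obstacle is essentially bookkeeping rather than conceptual: one must be scrupulous that every term is evaluated at the \emph{unperturbed} argument $(\mbf{X}(t),t)$ after expansion, so that the resulting PDE coefficients $\mbf{u}$, $\nabla\mbf{u}$ are the original ones and the ``form is maintained'' in the sense stressed after Definition \ref{def:finite_symm}; a sign error or a misplaced evaluation point would spoil the clean cancellation. Once that is handled, the derivation of \eqref{eq:symm} is immediate, and recognizing the left-hand side as (minus) the Lie derivative $\Lie{\mbf{u}}\etab$ plus $\partial_t\etab$ is a convenient way to package it for later use.
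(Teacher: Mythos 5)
Your proposal is correct and follows essentially the same route as the paper's proof: substitute the perturbed pathline into \eqref{eq:ODE_mapped}, apply the chain rule and a Taylor expansion, cancel the zeroth-order terms using \eqref{eq:ODE}, and set the order-$\epsilon$ coefficient to zero, upgrading from pathlines to all of $\Omega\times[0,T)$ via the arbitrariness of the initial condition. Your added observations (the converse by reversal, and the propagation of $\nabla\cdot\etab=0$) are sound but not needed beyond what the paper records.
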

\begin{proof} The proof is a direct exercise and will be omitted.
\end{proof}

\subsection{Lie derivatives} 

Given a vector field $\mbf{u}$, the equations that determine constants of motion \eqref{eq:CoM} and infinitesimal contemporaneous symmetries \eqref{eq:symm} are 
best written in terms of the linear operator $\Lie{\mbf{u}}$, called the Lie derivative along the vector field $\mbf{u}$, which maps tensors into tensors of the same rank and satisfies the Leibniz product rule of differentiation. Without going into the details of the Lie derivative's natural definition in terms of transport on tensor bundles (see \cite{santilli1982foundations, nakahara2018geometry} for details, and  \cite{arnold2008topological, besse2017geometric, gilbert2018geometric} for applications in fluids) we simply define the Lie derivative in the two cases of interest:

\begin{definition}
Let $\mbf{V}, \mbf{W}: \Omega \times [0,T) \to \mathbb{R}^3$ be two smooth divergence-free vector fields and let $B: \Omega \times [0,T) \to \mathbb{R}$ be a smooth scalar function. The Lie derivative of $B$ along $\mbf{V}$ is defined as the smooth scalar function
\begin{equation}
\nonumber 
\Lie{\mbf{V}} B :=\mbf{V} \cdot \nabla B\,,\quad  \text{for \,\, all} \quad \mbf{x} \in \Omega, \quad t \in [0,T)\,.
\end{equation}
Similarly, the Lie derivative of $\mbf{W}$ along $\mbf{V}$ is defined as the smooth divergence-free vector field
\begin{equation}
\label{eq:Lie_vector}
\Lie{\mbf{V}} \mbf{W} := \mbf{V} \cdot \nabla  \mbf{W} - \mbf{W} \cdot \nabla  \mbf{V} \,,\quad  \text{for \,\, all} \quad \mbf{x} \in \Omega, \quad t \in [0,T)\,.
\end{equation}
We say  that two vector fields $\mbf{V}, \mbf{W}$ commute  if $\Lie{\mbf{V}} \mbf{W} = \mbf{0}$.
\end{definition}

\begin{remark}
\label{rem:Lie}
In the case of vector fields, it follows from \eqref{eq:Lie_vector} that the Lie derivative defines a Lie algebra of divergence-free vector fields via the bilinear binary operation known as the Lie bracket, $[\mbf{V},\mbf{W}] := \Lie{\mbf{V}} \mbf{W}$, satisfying
\begin{eqnarray}
\nonumber 
\Lie{\mbf{W}}\mbf{V} + \Lie{\mbf{V}} \mbf{W}  =  \mbf{0} &\qquad & \text{(anticommutativity)}\,,\\
\label{eq:Jacobi}
\Lie{\mbf{U}} \Lie{\mbf{V}} \mbf{W} + \Lie{\mbf{V}} \Lie{\mbf{W}} \mbf{U} + \Lie{\mbf{W}} \Lie{\mbf{U}} \mbf{V} =  \mbf{0} 
&\qquad & \text{(Jacobi \,\, identity)}\,,
\end{eqnarray}
where $\mbf{U},\mbf{V}, \mbf{W}$ are any divergence-free vector fields.
\end{remark}

As a result of these definitions,  the equations defining constants of motion and infinitesimal symmetries become extremely simple:
\begin{corollary}
Under the hypotheses of definitions \ref{def:CoM} and \ref{def:symm}, a smooth scalar $C$ is a constant of motion and a smooth divergence-free vector field $\etab$ is an infinitesimal contemporaneous symmetry if and only if
\begin{eqnarray}
\label{eq:CoM_Lie}
\frac{\partial C}{\partial t} + \Lie{\mbf{u}} C  &=& 0\,,\quad \text{for \,\, all}\quad \mbf{x} \in \Omega, \, t \in [0,T)\,,\\
\label{eq:symm_Lie}
\frac{\partial \etab}{\partial t} + \Lie{\mbf{u}} \etab &=& \mbf{0}\,,\quad \text{for \,\, all}\quad \mbf{x} \in \Omega, \, t \in [0,T)\,.
\end{eqnarray}
\end{corollary}
\begin{proof}
The proof follows directly from the respective definitions of Lie derivative.
\end{proof}

\subsection{The gradient of a constant of motion}
Taking the gradient of equation \eqref{eq:CoM} gives an equation for the gradient of a constant of motion:
\begin{equation}
\label{eq:gradCoM}
\frac{\partial \nabla C}{\partial t} + (\mbf{u}\cdot \nabla) \nabla C+ (\nabla \mbf{u})^T \nabla C  = \mbf{0}\,, \quad \text{for \,\, all}\quad \mbf{x} \in \Omega, \, t \in [0,T)\,,
\end{equation}
where $(\nabla \mbf{u})^T$ is the square matrix with components $[(\nabla \mbf{u})^T]_{i}^{\,\,j} := \frac{\partial u^j}{\partial x^i}$. One has the result:
\begin{lemma}
\label{lem:grad}
Suppose the gradient of a scalar function $C$ satisfies equation \eqref{eq:gradCoM}. Then the scalar function $C$ is a constant of motion for system \eqref{eq:ODE}, up to an additive function of time that can be found.
\end{lemma}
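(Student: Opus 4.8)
The plan is to track the failure of $C$ to be a constant of motion, i.e. to study the scalar
\[
\Phi(\mbf{x},t) := \frac{\partial C}{\partial t} + (\mbf{u}\cdot\nabla)C\,,
\]
which by \eref{eq:CoM} vanishes identically exactly when $C$ is a constant of motion. I would show that $\Phi$ has vanishing spatial gradient, so that it is a function of $t$ alone, and then absorb that function of time into $C$.

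First I would compute $\nabla\Phi$ componentwise. Interchanging the time derivative with the spatial gradient (legitimate since $C$ is smooth) gives $\nabla(\partial_t C) = \partial_t(\nabla C)$. For the advective term, the Leibniz rule together with equality of mixed partials gives, in the $i$-th component, $\partial_i(u^j\partial_j C) = (\partial_i u^j)(\partial_j C) + u^j\partial_j(\partial_i C)$, that is, $\nabla\big((\mbf{u}\cdot\nabla)C\big) = (\nabla\mbf{u})^T\nabla C + (\mbf{u}\cdot\nabla)\nabla C$ with $(\nabla\mbf{u})^T$ as defined before \eref{eq:gradCoM}. Adding the two contributions shows that $\nabla\Phi$ is precisely the left-hand side of \eref{eq:gradCoM}, hence $\nabla\Phi(\mbf{x},t) = \mbf{0}$ for all $\mbf{x}\in\Omega$, $t\in[0,T)$. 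Since each admissible domain $\Omega$ ($\mathbb{T}^3$, $\mathbb{R}^2\times\mathbb{T}$, or $\mathbb{T}^2\times\mathbb{R}$) is connected, a smooth function with identically vanishing spatial gradient is spatially constant at each fixed time, so $\Phi(\mbf{x},t) = g(t)$ for some smooth $g$. This $g$ is explicitly computable from $C$: one may take $g(t) = \Phi(\mbf{x}_0,t)$ at any fixed basepoint $\mbf{x}_0$, or (on the fully periodic torus, or under suitable decay in the non-compact directions) integrate over $\Omega$ and use $\nabla\cdot\mbf{u} = 0$ to discard the advective term, obtaining $g(t) = |\Omega|^{-1}\,\frac{d}{dt}\int_\Omega C\,d\mbf{x}$.

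Finally I would set $\widetilde C(\mbf{x},t) := C(\mbf{x},t) - \int_0^t g(s)\,ds$. Since the subtracted term depends on $t$ only, $\nabla\widetilde C = \nabla C$, and $\partial_t\widetilde C + (\mbf{u}\cdot\nabla)\widetilde C = \Phi - g(t) = 0$, so $\widetilde C$ is a genuine constant of motion differing from $C$ by the now-identified additive function of time, which is exactly the assertion. I do not expect a substantive obstacle: the argument is essentially a differentiation-under-the-gradient computation, and the only points deserving a word of care are the interchange of partial derivatives (fine by smoothness) and the appeal to connectedness of $\Omega$; the divergence-free hypothesis on $\mbf{u}$ (and, in the non-compact cases, decay at infinity) enters only in the optional averaged formula for $g$.
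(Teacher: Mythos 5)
Your argument is correct and follows essentially the same route as the paper's proof: you show that the gradient of $\frac{\partial C}{\partial t} + (\mbf{u}\cdot\nabla)C$ is exactly the left-hand side of \eref{eq:gradCoM}, conclude it is a function of time only, and absorb that function into $C$. The extra details you supply (the explicit Leibniz computation, the connectedness of $\Omega$, and the averaged formula for $g$) are welcome but not a different method.
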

\begin{proof}
Retracing the steps leading to \eqref{eq:gradCoM} from  \eqref{eq:CoM} we find
\mbox{$\nabla\left(\frac{\partial C}{\partial t} + ({\mbf{u}}\cdot \nabla) C\right)=\mbf{0}$,} implying $\frac{\partial C}{\partial t} + ({\mbf{u}}\cdot \nabla) C = f(t)$ for some scalar function $f$. Thus, the scalar \mbox{${C}'(\mbf{x},t) := C(\mbf{x},t) - \int f(t) \mathrm{d}t$} clearly satisfies 
$\frac{\partial C'}{\partial t} + ({\mbf{u}}\cdot \nabla) C' = 0$, which means that ${C}'(\mbf{x},t)$ is a constant of  motion.
\end{proof}

Unsurprinsingly, equation \eqref{eq:gradCoM} is yet another instance of a transport equation involving Lie derivatives, this time for dual vector fields, of which the gradient is a particular case. Solutions of \eqref{eq:gradCoM} where $\nabla C$ is replaced by a dual vector that is not a gradient, are called Lagrangian $1$-forms \cite{Bus03a, Bus03}, and are used in the construction of action principles for system \eqref{eq:ODE}, in what is known as the inverse problem of the calculus of variations. See  \cite{hojman1985s, hojman1996non, olver2000applications} and references therein for details, including Lie-advected higher-order tensors such as $2$-forms (called symplectic forms), antisymmetric $(2,0)$ tensors (called Poisson structures) and $(1,1)$ tensors (called recursion operators). Also, notice that Cauchy invariants (and generalised versions thereof) correspond to Lie-transported $p$-forms that are exact, such as the gradient, which is an exact $1$-form. See \cite{Zakharov89, kuznetsov2008mixed, besse2017geometric} and references therein for details.

\subsection{Zeroes of an infinitesimal symmetry and regularity of the flow}

We have an important new result on the time evolution of the zeros of infinitesimal symmetries:

\begin{theorem}[Conservation of zeroes of an infinitesimal symmetry and regularity of the flow] 
\label{thm:conservation_of_zeroes}
 Let $\mbf{u}:\Omega\times [0,T)\to\mathbb{R}^3$ be a smooth vector field and let $\mbf{X}(t)$ be a solution of the corresponding system \eref{eq:ODE} for $t \in [0,T)$, with initial position $\mbf{X}(0) = (X_0,Y_0,Z_0).$ Let $\etab(\mbf{x},t)$ be   a smooth infinitesimal contemporaneous symmetry of system \eref{eq:ODE}. Then, for any $t\in [0,T)$ we have $\etab(\mbf{X}(t),t) = \mbf{0}$ if and only if  $\etab(X_0,Y_0,Z_0,0) = \mbf{0}.$
 
In words, if the governing flow is differentiable then the zeroes of any infinitesimal symmetry follow the pathlines, and no new zeroes can be generated at any time $t \in (0,T)$.
\end{theorem}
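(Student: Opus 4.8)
The plan is to restrict the symmetry field to the given pathline, show that this restriction obeys a linear homogeneous ordinary differential equation with continuous coefficients, and then invoke uniqueness for linear systems. Concretely, I would fix the pathline $\mbf{X}(t)$ with $\mbf{X}(0)=(X_0,Y_0,Z_0)$ and define the $\mathbb{R}^3$-valued function $\bomega(t):=\etab(\mbf{X}(t),t)$. Differentiating in $t$ and using the chain rule together with $\dot{\mbf{X}}(t)=\mbf{u}(\mbf{X}(t),t)$ from \eqref{eq:ODE} gives
\[
\dot{\bomega}(t)=\left.\left(\frac{\partial\etab}{\partial t}+(\mbf{u}\cdot\nabla)\etab\right)\right|_{(\mbf{X}(t),t)}.
\]
By Theorem \ref{thm:symm} (equivalently by \eqref{eq:symm_Lie}) the right-hand side equals $\big((\etab\cdot\nabla)\mbf{u}\big)\big|_{(\mbf{X}(t),t)}$, so, introducing the time-dependent matrix $A(t)$ with entries $A^i_j(t):=\frac{\partial u^i}{\partial x^j}(\mbf{X}(t),t)$, we arrive at the linear homogeneous non-autonomous system $\dot{\bomega}(t)=A(t)\,\bomega(t)$ for $t\in[0,T)$. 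This is precisely the statement that an infinitesimal symmetry is stretched along pathlines exactly like the Euler vorticity.

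Next I would note that the hypotheses make $A(t)$ well defined and continuous (indeed smooth) on $[0,T)$: the field $\mbf{u}$ is of class $C^\infty$, hence so is $\nabla\mbf{u}$, and $\mbf{X}(\cdot)$ is a $C^1$ curve which, by assumption that a solution exists on $[0,T)$, remains in $\Omega$ for all $t\in[0,T)$. The classical existence–uniqueness theorem for linear systems then provides a fundamental matrix $\Phi(t)$, with $\Phi(0)$ equal to the identity and $\Phi(t)$ invertible for every $t\in[0,T)$, such that $\bomega(t)=\Phi(t)\,\bomega(0)$. Consequently $\bomega(t)=\mbf{0}$ for some $t\in[0,T)$ if and only if $\bomega(0)=\etab(X_0,Y_0,Z_0,0)=\mbf{0}$, and in that case $\bomega(t)=\mbf{0}$ for all $t\in[0,T)$; in particular no new zero of $\etab$ can be generated along the pathline at any $t\in(0,T)$.

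There is no deep obstacle here; the single point that must be handled with care — and the reason the statement insists that the governing flow be differentiable — is the continuity of $A(t)$. If $\mbf{u}$ were to lose regularity at some time $t<T$, the coefficient matrix $A(t)$ would become ill-defined and the uniqueness argument would break down, which is exactly the mechanism that would permit zeroes of symmetries (and therefore, for the Euler equations, vortex lines) to be created or annihilated at a singular time.
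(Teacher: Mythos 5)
Your proof is correct and follows essentially the same route as the paper: restrict $\etab$ to the pathline, use the chain rule and the symmetry equation \eqref{eq:symm} to obtain the linear homogeneous system $\dot{\mbf{f}}(t)=(\mbf{f}(t)\cdot\nabla)\mbf{u}(\mbf{X}(t),t)$, and conclude by uniqueness for linear ODEs with smooth coefficients. Your added remarks on the fundamental matrix and on why smoothness of $\mbf{u}$ is the essential hypothesis are consistent with the paper's own discussion following the theorem.
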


\begin{proof}
Consider the vector function $\mbf{f}:[0,T)\to\mathbb{R}^3$ defined by $\mbf{f}(t) := \etab(\mbf{X}(t),t)$. Using the chain rule, its total time derivative satisfies
$\dot{\mbf{f}}(t) = \frac{\partial \etab}{\partial t}(\mbf{X}(t),t) + (\dot{\mbf{X}}(t)\cdot \nabla) \etab(\mbf{X}(t),t)$. By virtue of \eqref{eq:ODE} we have $\dot{\mbf{X}}(t)=\mbf{u}(\mbf{X}(t),t)$ and thus $\dot{\mbf{f}}(t) = \frac{\partial \etab}{\partial t}(\mbf{X}(t),t) + (\mbf{u}(\mbf{X}(t),t)\cdot \nabla) \etab(\mbf{X}(t),t)$. As $\etab$ is an infinitesimal symmetry, from \eqref{eq:symm} we get $\dot{\mbf{f}}(t) = (\etab(\mbf{X}(t),t)\cdot \nabla) \mbf{u}(\mbf{X}(t),t) = (\mbf{f}(t)\cdot \nabla) \mbf{u}(\mbf{X}(t),t)$. The latter is a non-autonomous linear homogeneous first-order differential equation for the vector $\mbf{f}(t)$. If the vector field $\mbf{u}$ is smooth then so are the matrix components of this equation, $\nabla \mbf{u}(\mbf{X}(t),t)$. From uniqueness of solutions to first-order ODEs we get $\mbf{f}(t)=\mbf{0}$ if and only if $\mbf{f}(0)=\mbf{0}$, which gives the desired result.
\end{proof}

This theorem is quite relevant to the scenario of finite-time singularity in fluid equations when we interpret $\mbf{u}$ as the velocity field. It shows that zeroes of infinitesimal symmetries can only be created when a solution $\mbf{u}$ becomes non-regular (see remark \ref{rem:blowup_and_zeroes} for an application). A similar result can be obtained for the gradient of a constant of motion $C$:  the zeroes of $\nabla C$ follow the pathlines, and no new zeroes can be generated at any time $t\in (0,T)$.

\subsection{Constructing new symmmetries and constants of motion from known symmetries}

The general Lie algebra of divergence-free vector fields mentioned in remark \ref{rem:Lie} has an important sub-algebra induced by
the group structure inherent to the finite symmetry transformations for system \eqref{eq:ODE}, definition \ref{def:finite_symm}. This is the Lie algebra of infinitesimal symmetries for system \eqref{eq:ODE}.

\begin{theorem}[Construction of new symmetries starting from known symmetries]
\label{thm:construct_symm}
Let $\etab_1$, $\etab_2$ be two smooth divergence-free infinitesimal contemporaneous symmetries for system \eqref{eq:ODE}. Namely, $\etab_1$ and $\etab_2$ satisfy the PDE system \eqref{eq:symm_Lie}. Then the smooth divergence-free  vector field defined by
$$\etab_3 := \Lie{\etab_1} \etab_2 = (\etab_1 \cdot \nabla) \etab_2 - (\etab_2 \cdot \nabla) \etab_1$$
is an infinitesimal contemporaneous symmetry for system \eqref{eq:ODE}.
\end{theorem}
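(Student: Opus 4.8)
The plan is to verify that $\etab_3 := \Lie{\etab_1}\etab_2$ satisfies the defining PDE \eqref{eq:symm_Lie}, i.e. $\partial_t \etab_3 + \Lie{\mbf{u}}\etab_3 = \mbf{0}$, using only the facts that $\etab_1,\etab_2$ satisfy the analogous equations and that $\Lie{}$ behaves algebraically like a Lie-bracket derivation. First I would note that the divergence-free property of $\etab_3$ is automatic: remark \ref{rem:Lie} (or equation \eqref{eq:Lie_vector}) already guarantees that the Lie bracket of two divergence-free fields is divergence-free, so that part needs no work. The substance is the transport equation.

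The key computational observation is that partial differentiation in $t$ commutes with the spatial Lie bracket in the sense $\partial_t(\Lie{\etab_1}\etab_2) = \Lie{\partial_t\etab_1}\etab_2 + \Lie{\etab_1}(\partial_t\etab_2)$, which is just the Leibniz rule applied to the bilinear expression $(\etab_1\cdot\nabla)\etab_2 - (\etab_2\cdot\nabla)\etab_1$, since $\partial_t$ hits each factor and commutes with the spatial derivatives $\nabla$. Then I would substitute $\partial_t\etab_i = -\Lie{\mbf{u}}\etab_i$ for $i=1,2$ from \eqref{eq:symm_Lie}, obtaining
\[
\partial_t \etab_3 = -\Lie{\Lie{\mbf{u}}\etab_1}\etab_2 - \Lie{\etab_1}(\Lie{\mbf{u}}\etab_2).
\]
Next I would compute $\Lie{\mbf{u}}\etab_3 = \Lie{\mbf{u}}\Lie{\etab_1}\etab_2$ and invoke the Jacobi identity \eqref{eq:Jacobi} with $(\mbf{U},\mbf{V},\mbf{W}) = (\mbf{u},\etab_1,\etab_2)$, which rearranges $\Lie{\mbf{u}}\Lie{\etab_1}\etab_2$ into $-\Lie{\etab_1}\Lie{\etab_2}\mbf{u} - \Lie{\etab_2}\Lie{\mbf{u}}\etab_1$. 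Using anticommutativity $\Lie{\etab_2}\mbf{u} = -\Lie{\mbf{u}}\etab_2$ and $\Lie{\etab_2}(\cdot) = -\Lie{(\cdot)}\etab_2$ to massage the two terms, one finds $\Lie{\mbf{u}}\etab_3 = \Lie{\etab_1}(\Lie{\mbf{u}}\etab_2) + \Lie{\Lie{\mbf{u}}\etab_1}\etab_2$, which is exactly $-\partial_t\etab_3$. Adding gives $\partial_t\etab_3 + \Lie{\mbf{u}}\etab_3 = \mbf{0}$, as required.

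The main obstacle is purely bookkeeping: keeping the signs straight when shuttling between the Jacobi identity as stated in \eqref{eq:Jacobi} (which is cyclic in $\Lie{\mbf{U}}\Lie{\mbf{V}}\mbf{W}$) and the "derivation" form $\Lie{\mbf{u}}[\etab_1,\etab_2] = [\Lie{\mbf{u}}\etab_1,\etab_2] + [\etab_1,\Lie{\mbf{u}}\etab_2]$, and matching the latter against the Leibniz expansion of $\partial_t\etab_3$. One must apply anticommutativity consistently — both in the form $\Lie{\mbf{V}}\mbf{W} = -\Lie{\mbf{W}}\mbf{V}$ and inside nested brackets — to see the two contributions cancel. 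An alternative, slightly slicker route that avoids this: treat $\mathcal{D} := \partial_t + \Lie{\mbf{u}}$ as a single operator acting on time-dependent divergence-free vector fields, observe that \eqref{eq:symm_Lie} says $\mathcal{D}\etab_i = \mbf{0}$, and establish the Leibniz-type identity $\mathcal{D}(\Lie{\etab_1}\etab_2) = \Lie{\mathcal{D}\etab_1}\etab_2 + \Lie{\etab_1}(\mathcal{D}\etab_2)$ once and for all (the $\partial_t$ part is the ordinary product rule, the $\Lie{\mbf{u}}$ part is the Jacobi identity). The result then follows immediately by setting both right-hand side terms to zero.
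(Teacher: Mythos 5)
Your proof is correct and follows essentially the same route as the paper's: apply the Leibniz rule to $\partial_t(\Lie{\etab_1}\etab_2)$, use the Jacobi identity \eqref{eq:Jacobi} together with anticommutativity to rewrite $\Lie{\mbf{u}}\Lie{\etab_1}\etab_2$, and observe that the resulting terms combine into $\Lie{\etab_1}(\partial_t\etab_2+\Lie{\mbf{u}}\etab_2)-\Lie{\etab_2}(\partial_t\etab_1+\Lie{\mbf{u}}\etab_1)=\mbf{0}$. Your closing remark about the operator $\mathcal{D}=\partial_t+\Lie{\mbf{u}}$ is just a repackaging of that same computation.
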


\begin{proof}
The proof is a direct exercise and will be omitted.
\end{proof}

Theorem \ref{thm:construct_symm} is very important in practice as it allows for the construction of a succession of infinitesimal symmetries for system \eqref{eq:ODE} starting from two or more infinitesimal symmetries, providing thus a Lie algebra that  can be finite- or infinite-dimensional.  The only case when this construction is not possible is when the two starting infinitesimal symmetries commute: $\Lie{\etab_1} \etab_2 = \mbf{0}$. For this case we have an even better result:

\begin{theorem}[Construction of constants of motion starting from known commuting symmetries]
\label{thm:construct_CoM}
Let $\etab_1$, $\etab_2$ be two smooth divergence-free infinitesimal contemporaneous symmetries for system \eqref{eq:ODE} such that they are not linearly dependent (i.e., such that they are not proportional via a simple numerical factor). Assume these symmetries commute:
$\Lie{\etab_1} \etab_2 = \mbf{0}$. Then there are two cases:
\begin{enumerate}
\item If the vector product $\etab_1 \times \etab_2$ is not zero everywhere on $\Omega$ for all times, then $\etab_1 \times \etab_2 = \nabla K$ where $K$ is a non-trivial constant of motion for system \eqref{eq:ODE}, defined at least locally in $\Omega \times [0,T)$.
\item If the vector product $\etab_1 \times \etab_2$
is zero everywhere on $\Omega$ for all times, then $\etab_2 = K  \etab_1$, where $K$ is a constant of motion for system  \eqref{eq:ODE} whose gradient $\nabla K$ is not zero everywhere and is orthogonal to $\etab_1$ and $\etab_2$.
\end{enumerate}
\end{theorem}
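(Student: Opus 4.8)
The plan is to treat the two cases according to whether $\etab_1 \times \etab_2$ vanishes identically or not, and in each case to exhibit the relevant scalar $K$ and verify it is a constant of motion using equation \eqref{eq:CoM_Lie}.

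\textbf{Case 1.} Set $\mbf{V} := \etab_1 \times \etab_2$. First I would record that $\mbf{V}$ is itself Lie-advected: since $\Lie{\mbf{u}}$ acts as a derivation on the algebra of fields, and both $\etab_1$ and $\etab_2$ satisfy \eqref{eq:symm_Lie}, one shows that $\mbf{V}$ satisfies the analogue of the transport equation for a $2$-form (or equivalently, using $\nabla\cdot\etab_i=0$, for the cross product viewed as a vector density). Concretely, the cross product of two divergence-free vectors behaves under Lie transport like a closed $2$-form, so $\partial_t \mbf{V} + (\mbf{u}\cdot\nabla)\mbf{V} + (\nabla\cdot\mbf{u})\mbf{V} - ((\nabla\mbf{u})\,\mbf{V} \text{ terms})$ collapse appropriately; with $\nabla\cdot\mbf{u}=0$ this gives exactly the transport equation \eqref{eq:gradCoM} that the gradient of a constant of motion must satisfy. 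Next I would show $\mbf{V}$ is curl-free: the commutation hypothesis $\Lie{\etab_1}\etab_2 = \mbf 0$ together with $\nabla\cdot\etab_1 = \nabla\cdot\etab_2 = 0$ gives the vector identity $\nabla\times(\etab_1\times\etab_2) = \etab_1(\nabla\cdot\etab_2) - \etab_2(\nabla\cdot\etab_1) + (\etab_2\cdot\nabla)\etab_1 - (\etab_1\cdot\nabla)\etab_2 = -\Lie{\etab_1}\etab_2 = \mbf 0$. Hence, at least locally (by the Poincaré lemma), $\mbf{V} = \nabla K$ for some scalar $K$. Then Lemma \ref{lem:grad} applies directly: since $\nabla K$ satisfies \eqref{eq:gradCoM}, $K$ is a constant of motion up to an additive function of time, which can be absorbed. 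Non-triviality follows because $\etab_1\times\etab_2 \not\equiv \mbf 0$ means $\nabla K \not\equiv \mbf 0$.

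\textbf{Case 2.} If $\etab_1\times\etab_2 \equiv \mbf 0$, then at every point where $\etab_1\neq\mbf 0$ the two vectors are parallel, so there is a scalar function $K$ with $\etab_2 = K\,\etab_1$ (defined wherever $\etab_1\neq\mbf 0$, and by Theorem \ref{thm:conservation_of_zeroes} the zero set of $\etab_1$ follows pathlines, so this is consistent along the flow). Substituting $\etab_2 = K\etab_1$ into the symmetry equation \eqref{eq:symm_Lie} for $\etab_2$ and expanding by the Leibniz rule, $\partial_t(K\etab_1) + \Lie{\mbf u}(K\etab_1) = \big(\partial_t K + \mbf u\cdot\nabla K\big)\etab_1 + K\big(\partial_t\etab_1 + \Lie{\mbf u}\etab_1\big)$; the second bracket vanishes because $\etab_1$ is a symmetry, leaving $\big(\partial_t K + \mbf u\cdot\nabla K\big)\etab_1 = \mbf 0$, so $\partial_t K + \mbf u\cdot\nabla K = 0$ wherever $\etab_1\neq\mbf 0$, i.e. $K$ is a constant of motion. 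That $\nabla K$ is not identically zero is forced by the assumption that $\etab_1,\etab_2$ are not linearly dependent (a constant $K$ would make them proportional). Finally, $\etab_1\times\etab_2 = \mbf 0$ is equivalent to both being parallel to a common direction, so the orthogonality $\nabla K \cdot \etab_1 = 0$: differentiating $\etab_2 = K\etab_1$ and dotting with $\etab_1$, or more directly, since $K$ is constant along pathlines and $\etab_1$ generates a symmetry that maps pathlines to nearby pathlines preserving the value of $K$, the derivative of $K$ along $\etab_1$ vanishes; the same argument with $\etab_2$ gives $\nabla K\cdot\etab_2 = 0$.

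\textbf{Main obstacle.} The routine parts are the Leibniz-rule expansions in both cases. The delicate point in Case 1 is establishing that $\etab_1\times\etab_2$ satisfies precisely the dual-vector transport equation \eqref{eq:gradCoM} — this is the step that really uses that one is working with the Lie transport of a $2$-form and the divergence-free conditions, and stating it cleanly without invoking the tensor-bundle formalism the paper is deliberately avoiding will require care (one either does the component computation by hand, or appeals to the known identity $\Lie{\mbf u}(\alpha\wedge\beta) = (\Lie{\mbf u}\alpha)\wedge\beta + \alpha\wedge(\Lie{\mbf u}\beta)$ translated into vector-calculus language). The delicate point in Case 2 is handling the set where $\etab_1$ vanishes, so that $K$ is genuinely well-defined and smooth; here Theorem \ref{thm:conservation_of_zeroes} is what makes the definition coherent along the flow, and I would remark that away from that (generically lower-dimensional) set everything is smooth, with the orthogonality statement understood on the open set where the fields are nonzero.
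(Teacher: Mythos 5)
Your overall route coincides with the paper's: in Case 1 you use the vector identity $\nabla\times(\etab_1\times\etab_2) = -\Lie{\etab_1}\etab_2 = \mbf{0}$ (with the divergence-free conditions) to get a local potential $K$, then aim to show $\nabla K$ satisfies \eqref{eq:gradCoM} and invoke Lemma \ref{lem:grad}; in Case 2 you write $\etab_2 = K\etab_1$ and apply $\partial_t + \Lie{\mbf{u}}$ with the Leibniz rule. However, the one genuinely nontrivial step of Case 1 — verifying that $\etab_1\times\etab_2$ actually satisfies the dual-vector transport equation \eqref{eq:gradCoM} — is deferred rather than carried out: you say the terms ``collapse appropriately'' and then, in your closing remarks, concede that this is the delicate point and offer two possible routes without executing either. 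The paper closes this gap with a short trick worth knowing: after applying Leibniz to the time derivative and convective terms and substituting the symmetry equation \eqref{eq:symm}, the residual
$\mbf{F} = \left((\etab_1\cdot\nabla)\mbf{u}\right)\times\etab_2 + \etab_1\times\left((\etab_2\cdot\nabla)\mbf{u}\right) + (\nabla\mbf{u})^T(\etab_1\times\etab_2)$
is an antisymmetric bilinear function of $\etab_1,\etab_2$ orthogonal to both, hence $\mbf{F} = M\,\etab_1\times\etab_2$; evaluating on $\etab_1=(1,0,0)$, $\etab_2=(0,1,0)$ gives $M = \nabla\cdot\mbf{u} = 0$. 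Your alternative via $\Lie{\mbf{u}}$ acting as a derivation on the contraction of the volume form would also work, but as written your Case 1 is a plan, not a proof.

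There is also a genuine error in your Case 2 orthogonality argument. Your ``more direct'' reasoning — that $\etab_1\cdot\nabla K = 0$ because the symmetry maps pathlines to nearby pathlines ``preserving the value of $K$'' — is false in general: a constant of motion is constant \emph{along} each pathline, but displacing by $\epsilon\etab_1$ moves you to a \emph{different} pathline on which $K$ generically takes a different value (indeed the Lie derivative of a constant of motion along a symmetry is a new, generally nonzero, constant of motion). The correct one-line argument, which is the paper's and which crucially uses the divergence-free hypotheses, is $0 = \nabla\cdot\etab_2 = \nabla\cdot(K\etab_1) = K\,\nabla\cdot\etab_1 + \etab_1\cdot\nabla K = \etab_1\cdot\nabla K$, and likewise for $\etab_2$. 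The rest of your Case 2 (the Leibniz expansion showing $K$ is a constant of motion, and the non-constancy of $K$ from linear independence) is correct and matches the paper.
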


\begin{proof}
(i) If $\etab_1 \times \etab_2$ is not zero everywhere on $\Omega$ for all times, we first want to find a scalar function $K: \Omega \times [0,T) \to \mathbb{R}$ such that $\etab_1 \times \etab_2 = \nabla K$. The compatibility condition for finding $K$ is \mbox{$\nabla \times (\etab_1 \times \etab_2) = \mbf{0}$} on $\Omega \times [0,T)$.  The function $K$ may not be globally defined, but at least it can be found to be smooth on some neighbourhood $B_\epsilon(\mbf{x}) \times [0,T)$, where $B_\epsilon(\mbf{x}) \subset \Omega$ is the ball of radius $\epsilon$ and centered at $\mbf{x}\in \Omega$. From elementary vector calculus identities we have $\nabla \times (\etab_1 \times \etab_2) = \etab_1 (\nabla \cdot \etab_2) - \etab_2 (\nabla \cdot \etab_1) + (\etab_2 \cdot \nabla) \etab_1 - (\etab_1 \cdot \nabla) \etab_2$. But $\nabla \cdot \etab_1=\nabla \cdot \etab_2=0$ since the fields are divergence-free. Therefore $\nabla \times (\etab_1 \times \etab_2) =- \Lie{\etab_1}\etab_2=\mbf{0}$. So $K$ exists locally. 

Second, we want to show that $\nabla C:=\nabla K$ satisfies \eqref{eq:gradCoM}. The LHS of this equation is:
$$\mbf{F}:=\frac{\partial \nabla K}{\partial t} + (\mbf{u}\cdot \nabla) \nabla K+ (\nabla \mbf{u})^T \nabla K = \frac{\partial}{\partial t}(\etab_1 \times \etab_2) + (\mbf{u}\cdot \nabla) (\etab_1 \times \etab_2)+ (\nabla \mbf{u})^T (\etab_1 \times \etab_2).$$
Looking at the first two terms of the last member, both the time derivative and the gradient satisfy the Leibniz product rule, so
$$\mbf{F} = \left(\frac{\partial \etab_1}{\partial t}+ (\mbf{u}\cdot \nabla) \etab_1\right) \times \etab_2 + \etab_1\times \left(\frac{\partial \etab_2}{\partial t}+ (\mbf{u}\cdot \nabla) \etab_2\right) + (\nabla \mbf{u})^T (\etab_1 \times \etab_2)\,.$$
Now, as $\etab_1$ and $\etab_2$ are infinitesimal symmetries they satisfy \eqref{eq:symm}, so we get
$$\mbf{F} = \left((\etab_1\cdot \nabla) \mbf{u}\right) \times \etab_2 + \etab_1\times \left((\etab_2\cdot \nabla) \mbf{u}\right) + (\nabla \mbf{u})^T (\etab_1 \times \etab_2)\,.$$
By inspection, the vector $\mbf{F}$ is an anticommutative algebraic bilinear function of the vectors $\etab_1, \etab_2$, such that $\mbf{F}\cdot \etab_1=\mbf{F}\cdot \etab_2=0$. Therefore, $\mbf{F} = M \etab_1 \times \etab_2$, where $M$ is some scalar function. By taking $\etab_1=(1,0,0)$ and $\etab_2=(0,1,0)$ we get readily $M=\nabla \cdot \mbf{u}$. As $\mbf{u}$ is divergence-free, we get $M=0$ and thus $\mbf{F}= \mbf{0}$. Therefore $\nabla K$ satisfies \eqref{eq:gradCoM}. 

Third and last, lemma \ref{lem:grad} establishes that $K$ is a constant of motion up to an additive function of time that can be found.

(ii) If the vector product $\etab_1 \times \etab_2$
is zero everywhere on $\Omega$ for all times, then $\etab_2 = K  \etab_1$ on $\Omega$ for all times, where by hypothesis $K$ is a scalar function that is not a simple numerical constant, thus $\nabla K$ is not zero everywhere. Let us apply the operator $\frac{\partial}{\partial t} + \Lie{\mbf{u}}$ to the equation $\etab_2 = K  \etab_1$. Using the Leibniz product rule for the Lie derivative we get
$$\frac{\partial \etab_2 }{\partial t} + \Lie{\mbf{u}} \etab_2 = K \left(\frac{\partial \etab_1}{\partial t} + \Lie{\mbf{u}} \etab_1\right) + \left(\frac{\partial K}{\partial t} + \Lie{\mbf{u}} K\right) \etab_1\,.$$
Recalling now that $\etab_1$ and $\etab_2$ are infinitesimal symmetries for system \eqref{eq:ODE}, namely that they satisfy \eqref{eq:symm}, we readily obtain
$\mbf{0} = \left(\frac{\partial K}{\partial t} + \Lie{\mbf{u}} K\right) \etab_1$ which, at the positions where $\etab_1$ is nonzero (generically this is almost everywhere), implies \mbox{$\frac{\partial K}{\partial t} + \Lie{\mbf{u}} K = 0$} and thus $K$ is a constant of motion for system \eqref{eq:ODE}. Now, $\nabla\cdot \etab_2 = \nabla\cdot \etab_1=0$ imply $\etab_1\cdot \nabla K =\etab_2\cdot \nabla K=0$.
\end{proof}

\begin{remark}
A dual version of theorem \ref{thm:construct_CoM}, stating in our notation that the cross product between the gradients of two constants of motion must be a symmetry, has been applied to fluid models in \cite{gibbon2010dynamics, gibbon2012stretching, kurgansky1987potential, kurgansky2000modified}. Also, a result analogous to theorem \ref{thm:construct_symm}, stating that the Lie derivative of a constant of motion along a symmetry is a constant of motion, is well known \cite{hojman1996non} and has been applied to fluid equations recently \cite{cheviakov2014generalized}.
\end{remark}

Theorem \ref{thm:construct_CoM} requires solely the existence of two commuting infinitesimal symmetries in order to construct a constant of motion, without recourse to Noether's theorem, namely without the need that system \eqref{eq:ODE} be derived from a Lagrangian or action principle that is invariant under some symmetry. See \cite{hojman1996non} for a  description of these so-called ``non-Noetherian'' symmetries. 

\section{The 3D Euler fluid equations; symmetries of the velocity field}
\label{sec:3}

Consider an inviscid, incompressible fluid, of unit mass density, moving in three spatial dimensions with smooth velocity field $\mbf{u}: \Omega \times [0,T) \to \mathbb{R}^3$. The notations $\mbf{x} := (x,y,z) = (x^1, x^2, x^3) \in \Omega$ and  $\mbf{u}(\mbf{x},t) := (u_{\mathrm{x}}(\mbf{x},t), u_{\mathrm{y}}(\mbf{x},t), u_{\mathrm{z}}(\mbf{x},t)) = (u^1(\mbf{x},t), u^2(\mbf{x},t), u^3(\mbf{x},t))$ will be used intechangeably. We will set either $\Omega=\mathbb{T}^3$, or $\Omega=\mathbb{R}^2 \times \mathbb{T}$,  or $\Omega=\mathbb{T}^2 \times \mathbb{R}$, where $\mathbb{T} := [0,2\pi)$ is the torus. 

The velocity field is thus divergence-free ($\nabla \cdot \mbf{u} = 0$) and satisfies the 3D Euler fluid equations:
\begin{equation}
 \label{eq:Euler}
\frac{\partial  \mbf{u}}{\partial t} + (\mbf{u} \cdot \nabla) \mbf{u} = - \nabla p\,, \qquad \text{for \,\, all}\quad \mbf{x} \in \Omega, \, t \in [0,T)\,,
\end{equation}
where the pressure $p$ is determined from the condition $\nabla \cdot \mbf{u} = 0$.  We will be interested in the solution of this nonlinear partial differential equation for the field $\mbf{u}(\mbf{x},t)$, under known smooth initial data $\mbf{u}(\mbf{x},0)=\mbf{u}_0(\mbf{x})$. At the same time we will be interested in the formalism of the previous sections, in terms of the solutions of the ODE system \eqref{eq:ODE}, namely the pathlines of the velocity field $\mbf{u}$. This velocity field may admit constants of motion and infinitesimal symmetries, namely solutions of equations \eqref{eq:CoM_Lie}--\eqref{eq:symm_Lie}.

The vorticity field $\bomega := \nabla \times \mbf{u}$ is clearly divergence-free and taking the curl of equation \eqref{eq:Euler} one obtains the well-known Helmholtz vorticity formulation \cite{helmholtz1858uber, oseledets1989new, arnold2008topological}: 
\begin{equation}
 \label{eq:Euler_vort}
\frac{\partial\bomega}{\partial t} + (\mbf{u} \cdot \nabla)\bomega - (\bomega \cdot \nabla) \mbf{u}  = \mbf{0}\,, \qquad \text{for \,\, all}\quad \mbf{x} \in \Omega, \, t \in [0,T)\,.
\end{equation}
From theorem \ref{thm:symm} it is then direct to prove:
\begin{lemma}[The vorticity is an infinitesimal symmetry of the flow]
Let $\mbf{u}: \Omega \times [0,T) \to \mathbb{R}^3$ be a smooth divergence-free velocity field satisfying the 3D Euler fluid equations  \eqref{eq:Euler}. Then the smooth divergence-free vorticity field $\bomega=\nabla \times \mbf{u}$ is an infinitesimal contemporaneous symmetry for the system \eqref{eq:ODE}. Namely, the vorticity vector field satisfies \eref{eq:symm} or, equivalently, \eref{eq:symm_Lie}.
\end{lemma}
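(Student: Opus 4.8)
The plan is to simply match the Helmholtz vorticity equation \eqref{eq:Euler_vort} against the defining equation \eqref{eq:symm} for an infinitesimal contemporaneous symmetry and invoke Theorem \ref{thm:symm}. First I would note that $\bomega := \nabla \times \mbf{u}$ is automatically divergence-free, since the divergence of any curl vanishes; this verifies the hypothesis \eqref{eq:eta_defn_incomp} that Theorem \ref{thm:symm} requires of a candidate symmetry. Next I would record that $\bomega$ inherits smoothness from $\mbf{u}$ (one derivative less, but $\mbf{u}\in C^\infty$), so $\bomega$ is a smooth divergence-free vector field on $\Omega\times[0,T)$ of the type appearing in \eqref{eq:eta_defn}.

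The core step is the derivation of \eqref{eq:Euler_vort} itself: take the curl of the Euler momentum equation \eqref{eq:Euler}. The pressure term drops out because $\nabla\times\nabla p=\mbf{0}$. For the advective term one uses the vector identity $\nabla\times\bigl((\mbf{u}\cdot\nabla)\mbf{u}\bigr) = (\mbf{u}\cdot\nabla)\bomega - (\bomega\cdot\nabla)\mbf{u} + (\nabla\cdot\mbf{u})\,\bomega$, and then kills the last term using $\nabla\cdot\mbf{u}=0$. (Equivalently one writes $(\mbf{u}\cdot\nabla)\mbf{u} = \tfrac12\nabla|\mbf{u}|^2 - \mbf{u}\times\bomega$ and takes the curl of $-\mbf{u}\times\bomega$.) The time-derivative term commutes with the curl, giving $\partial\bomega/\partial t$. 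Assembling these yields exactly
\begin{equation}
\nonumber
\frac{\partial\bomega}{\partial t} + (\mbf{u}\cdot\nabla)\bomega - (\bomega\cdot\nabla)\mbf{u} = \mbf{0},
\end{equation}
which is precisely equation \eqref{eq:symm} with $\etab$ replaced by $\bomega$.

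Having established that $\bomega$ is a smooth divergence-free vector field satisfying \eqref{eq:symm}, Theorem \ref{thm:symm} (the ``if'' direction) immediately gives that $\bomega$ is an infinitesimal contemporaneous symmetry of system \eqref{eq:ODE}; equivalently, by the Corollary, it satisfies \eqref{eq:symm_Lie}, i.e. $\partial\bomega/\partial t + \Lie{\mbf{u}}\bomega = \mbf{0}$. I do not anticipate a genuine obstacle here — the statement is essentially a dictionary translation between the classical Helmholtz equation and the symmetry-determining equation of Theorem \ref{thm:symm}. The only point requiring a little care is the bookkeeping in the curl of the nonlinear term, where the $(\nabla\cdot\mbf{u})\bomega$ contribution must be correctly identified and discarded via incompressibility; everything else is formal and follows from results already proved in the excerpt.
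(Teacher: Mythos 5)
Your proof is correct and follows the same route as the paper: the paper derives the Helmholtz vorticity equation \eqref{eq:Euler_vort} by taking the curl of \eqref{eq:Euler}, notes that $\bomega$ is divergence-free, and observes that the result is then immediate from Theorem \ref{thm:symm}, which is exactly your argument with the vector-calculus bookkeeping spelled out. No gaps.
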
 

\begin{remark}
Historically, equation \eqref{eq:Euler_vort} was first obtained in its $2$-form formulation in \cite{helmholtz1858uber}. Similar transport equations for $3$-forms were found: helicity in 3D Euler equations \cite{oseledets1989new}, potential vorticity in atmospheric models \cite{ertel1942neuer, hoskins1985use, truesdell1960classical}. Of course, global helicity was shown to be conserved in 3D Euler equations much earlier \cite{moreau1961constantes, moffatt1969degree}. Further, the relation between Lie-advected $p$-forms and Cauchy invariants along with a historical account of how these have been rediscovered over and over again is documented in \cite{besse2017geometric}.
\end{remark}

If the flow admits another symmetry different from the vorticity, then it is possible to generate new symmetries and constants of motion, via theorems \ref{thm:construct_CoM} and \ref{thm:construct_symm}. In the following subsections we will consider in detail two examples where this construction is possible.

\subsection{Symmetries of steady 3D Euler flows}
By definition, steady 3D Euler flows are time-independent solutions $\mbf{u}(\mbf{x})$ 
 to the 3D Euler fluid equations \eqref{eq:Euler}. Of course, the initial data 
 must be carefully selected so that the solution remains static in time. The ODE system \eqref{eq:ODE} becomes autonomous. The vorticity $\bomega$ is thus time-independent, and from equation \eqref{eq:Euler_vort} and the definition of Lie derivative we get
$\Lie{\mbf{u}} \bomega = \mbf{0}$, for all $\mbf{x}\in\Omega$. 
On the other hand, since $\mbf{u}$ itself is time-independent  and $\Lie{\mbf{u}}\mbf{u}=\mbf{0}$, we deduce that $\mbf{u}$ is an infinitesimal symmetry for the ODE system \eqref{eq:ODE}. In conclusion, steady Euler flows possess two commuting symmetries, given by the static vector fields $\etab_1 = \mbf{u}$ and $\etab_2 = \bomega.$ Theorem \ref{thm:construct_CoM} gives:

\begin{corollary}
Let $\mbf{u}$ be a steady 3D Euler flow, with vorticity field $\bomega = \nabla \times \mbf{u}$. Then there exists a constant of motion $\alpha$ satisfying $\mbf{u} \times \bomega = \nabla \alpha$. 
\end{corollary}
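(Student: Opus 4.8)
The plan is to read this corollary as the direct specialisation of Theorem~\ref{thm:construct_CoM}(i) to the pair of commuting infinitesimal symmetries $\etab_1=\mbf{u}$, $\etab_2=\bomega$ identified in the paragraph above. Since the corollary only asserts the existence of a constant of motion $\alpha$ with $\mbf{u}\times\bomega=\nabla\alpha$ (and not its non-triviality or global definedness), I would simply unpack the relevant part of that theorem's proof, which in the steady case is particularly short and, unlike the general statement, needs no case distinction.

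Concretely, the steps are as follows. First, use the elementary vector identity $\nabla\times(\mbf{u}\times\bomega)=\mbf{u}(\nabla\cdot\bomega)-\bomega(\nabla\cdot\mbf{u})+(\bomega\cdot\nabla)\mbf{u}-(\mbf{u}\cdot\nabla)\bomega$; since $\mbf{u}$ and $\bomega$ are divergence-free and, for a steady Euler flow, the steady form of \eqref{eq:Euler_vort} reads $\Lie{\mbf{u}}\bomega=(\mbf{u}\cdot\nabla)\bomega-(\bomega\cdot\nabla)\mbf{u}=\mbf{0}$, the right-hand side collapses to $-\Lie{\mbf{u}}\bomega=\mbf{0}$. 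Hence $\mbf{u}\times\bomega$ is curl-free, so by the Poincar\'e lemma there is a smooth scalar $\alpha$ with $\mbf{u}\times\bomega=\nabla\alpha$, at least on any simply connected subdomain of $\Omega$. Because the flow is steady, $\mbf{u}\times\bomega$ is time-independent and $\alpha$ may be chosen time-independent, so $\frac{\partial\alpha}{\partial t}+\mbf{u}\cdot\nabla\alpha=\mbf{u}\cdot(\mbf{u}\times\bomega)=0$ by the scalar triple-product identity; thus $\alpha$ satisfies \eqref{eq:CoM} and is a constant of motion in the sense of Definition~\ref{def:CoM}. (Alternatively, one verifies that $\nabla\alpha=\mbf{u}\times\bomega$ solves \eqref{eq:gradCoM} and invokes Lemma~\ref{lem:grad}, the additive function of time being constant here.)

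I do not expect a serious obstacle: the substance already sits in Theorems~\ref{thm:symm} and~\ref{thm:construct_CoM}. The two points that do warrant a careful word are (i) global definedness --- when $\Omega=\mathbb{T}^3$ (or the other tori considered) $\mbf{u}\times\bomega$ may carry non-zero circulation around non-contractible loops, so $\alpha$ is only guaranteed locally; this is precisely the topological obstruction flagged in the introduction and the reason the corollary is phrased as an existence statement --- and (ii) the degenerate Beltrami case $\mbf{u}\times\bomega\equiv\mbf{0}$, where $\nabla\alpha=\mbf{0}$ and $\alpha$ reduces to a numerical constant, still (vacuously) a constant of motion, and where Theorem~\ref{thm:construct_CoM}(ii) instead supplies a non-trivial invariant via $\bomega=K\mbf{u}$. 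Finally, identifying $\alpha$ with the Bernoulli head $p+\frac{1}{2}|\mbf{u}|^2$ follows at once by comparing $\nabla\alpha=\mbf{u}\times\bomega$ with the steady form of \eqref{eq:Euler}, but this is a by-product and not needed for the proof.
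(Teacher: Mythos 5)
Your proposal is correct and follows essentially the same route as the paper, which simply cites Theorem~\ref{thm:construct_CoM} applied to the commuting symmetries $\etab_1=\mbf{u}$ and $\etab_2=\bomega$; you merely unpack that theorem's proof in the steady case. Your observation that the constant-of-motion property then follows immediately from time-independence and the triple-product identity $\mbf{u}\cdot(\mbf{u}\times\bomega)=0$ is a nice steady-case shortcut that bypasses the general verification of \eqref{eq:gradCoM}, and your caveats about local definedness and the Beltrami degeneracy match the paper's subsequent remark.
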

\begin{proof}
This is a direct application of theorem \ref{thm:construct_CoM} on the symmetries $\etab_1 = \mbf{u}$ and $\etab_2 = \bomega$.
\end{proof}

\begin{remark}
The statement in this corollary is well known. In fact, $\alpha$ is the Bernoulli function, equal to $\alpha = p +\frac{1}{2} \mbf{u}\cdot \mbf{u}$, where $p$ is the pressure field. Moreover,  when $\nabla \times \mbf{u} = k \mbf{u}$ for some $k \in \mathbb{R}$ this corollary does not apply, in the sense that $\alpha$ becomes a numerical constant. In this case the velocity field is known as a Beltrami field \cite{arnold1966geometrie, childress1967construction}. As a result, Beltrami fields can be nonintegrable \cite{arnold2008topological}.
\end{remark}

As an example, consider a velocity field defined in Cartesian coordinates by $\mbf{u}(x,y,z) = (\cos z, a \sin z,0)$, where $a \in \mathbb{R}$ is a constant, and where either $\Omega=\mathbb{T}^3$ or $\Omega=\mathbb{R}^2\times\mathbb{T}$. The vorticity is $\bomega = -(a \cos z, \sin z, 0)$ and we find $\mbf{u}\times \bomega = \nabla \alpha\,,$ where
$\alpha = \frac{1}{4}(1-a^2) \cos 2z$
is a constant of motion (nontrivial when $|a| \neq 1$). Of course, when $|a|=1$ the flow is Beltrami, and a particular case of an ABC flow \cite{arnold1966geometrie, childress1967construction}. The pathlines of the velocity field satisfy the equations
\begin{equation}
\label{eq:char}
\dot{X} = \cos Z, \qquad \dot{Y} = a \sin Z, \qquad \dot{Z} = 0\,,
\end{equation}
and are simply straight lines (i.e., geodesics) on the sub-manifold $\Omega_Z$ defined by the equation $Z=\mathrm{const}$. For this simple flow, a general type of infinitesimal symmetry is thus given by
\begin{equation}
\label{eq:gen}
 \etab_{\mathrm{gen}} = f(\alpha) \mbf{u} + g(\alpha) \bomega\,,
\end{equation}
where $f, g$ are arbitrary scalar functions. Notice that this general symmetry has no $z$-component. Is it possible to find a new infinitesimal symmetry with a non-zero $z$-component? The answer depends on the topology of the spatial domain $\Omega$:\\

\noindent (i) If $\Omega = \mathbb{T}^3$ then $\Omega_Z = \mathbb{T}^2$. Notice that the values of $Z$ for which the pair $(\cos Z, a \sin Z)$ is incommensurate, cover densely the region $Z \in [0,2\pi).$ In these  cases, the pathlines are infinitely long and cover densely the $2$-torus $\Omega_Z$ parametrised by $(x,y).$ Let us call the sub-manifolds $\Omega_Z$ `ergodic' in such cases. On the other hand, embedded in the region $Z \in [0,2\pi)$ are an infinite number of points for which the pair $(\cos Z, a \sin Z)$ is commensurate, with closed pathlines on the $2$-torus $\Omega_Z$. We call the sub-manifolds $\Omega_Z$ regular in these cases. Let $\Omega_{Z_0}$ be an arbitrary regular sub-manifold, where $Z_0 \in (0,2\pi)$. Then, the open region ${\mathcal B}_\epsilon = \bigcup_{Z \in (Z_0-\epsilon,Z_0+\epsilon)} \Omega_Z$ contains, for arbitrary small enough $\epsilon>0$, ergodic sub-manifolds $\Omega_Z$. This is due to the fact that irrational numbers are uniformly distributed on the real line (equidistribution theorem).

We claim that it is impossible to find an infinitesimal symmetry continuous on ${\mathcal B}_\epsilon$, and defined globally there, with a non-zero component along the $z$-direction. To see this, suppose that such a symmetry exists, call it $\etab_0(x,y,z,t)$, and assume without loss of generality that its $z$-component is equal to $1$ at $t=0,x=0,y=0,z=Z_0:$ $\eta_0^z(0,0,Z_0,0) = 1\,.$ Let $\delta$ be such that $0<\delta<\epsilon$ and such that $\Omega_{Z_0+\delta} \in {\mathcal B}_\epsilon$ be ergodic. Recall that $\delta$ is arbitrarily small. Thus, the symmetry vector field $\delta  \etab_0(x,y,z,t)$ connects the point $P:=(0,0,Z_0) \in \Omega_{Z_0}$ with the point \mbox{$\widetilde{P}:=(\delta \eta_0^x(0,0,Z_0,0), \delta \eta_0^y(0,0,Z_0,0),Z_0+\delta) \in \Omega_{Z_0+\delta}\,.$} Using equation \eref{eq:char}, let us denote the pathline passing through $P$ by $(X(r),Y(r),Z_0) \in \Omega_{Z_0} \quad \forall r \in [0,1],$ with $(X(0),Y(0)) = (X(1),Y(1)) = (0,0)$ since $\Omega_{Z_0}$ is regular. Now, let us denote the pathline passing through $\widetilde{P}$ by $(\widetilde{X}(s),\widetilde{Y}(s),Z_0+\delta) \in \Omega_{Z_0+\delta}, \quad s \in (-\infty,\infty)$, which is not closed because $\Omega_{Z_0+\delta}$ is ergodic. But by definition the symmetry $\etab_0$ maps the closed pathline (of finite length) to the non-closed one (of infinite length). This is a contradiction, as it is not possible to continuously map these two pathlines. Therefore 
we conclude that the symmetry $\etab_0(x,y,z,t)$ is not continuous on ${\mathcal B}_\epsilon$.\\

\noindent (ii) If $\Omega = \mathbb{R}^2\times \mathbb{T}$ then $\Omega_Z = \mathbb{R}^2$ and the pathlines are just straight lines which extend without bound. All sub-manifolds are regular, and therefore the limitation of the previous case disappears. The following is an infinitesimal symmetry:
$\etab_{1}(x,y,z,t) = (-y , x\,a^2, a)$. The commutator of the vorticity with this symmetry is $\etab_{2}:=\Lie{\bomega}{\etab_{1}}=(1-a^2)(\sin z, a \cos z, 0)$. As  this is zero when $|a|=1$, we get two sub-cases:\\
\noindent (ii.1)  When $|a|=1$, i.e.~when the flow is Beltrami, we can write
\mbox{$\bomega \times \etab_{1} = \nabla C,$} giving the constant of motion $C=-a x \sin z + y \cos z$, which provides an equation for the pathlines on $\Omega_z$.\\
\noindent (ii.2) When $|a|\neq 1$, the symmetry $\etab_2$ is of the form \eqref{eq:gen} so there is no new information. In this case the following time-dependent symmetry does commute with the vorticity: $\etab_{3}(x,y,z,t) = (-y \,a^2 - t\,(1-a^2) \,a \sin z, x - t\,(1-a^2) \cos z, a)$. Thus, we can write
\mbox{$\bomega \times \etab_{3} = \nabla C',$} giving the constant of motion $\frac{1}{a} C'= x \sin z - a y \cos z - (1-a^2) t \sin z \cos z$ which provides an explicit time parameterisation for the pathlines on $\Omega_z$. 

A complete Lie algebra of symmetries can be produced, which we  omit for the sake of brevity.

\subsection{Symmetries of non-steady 3D Euler flows of stagnation-point type}
\label{subsec:3DGibbon}
We consider flows defined over the spatial domain $\Omega = \mathbb{T}^2 \times \mathbb{R}$ and use $(x,y)$ to parametrise $\mathbb{T}^2:=[0,2\pi)^2$ and $z$ to parametrise $\mathbb{R}.$ An exact solution of the 3D Euler equation \eqref{eq:Euler} is \cite{Gibbon1999497}:
\begin{equation}
\label{eq:u_Gibbon}
\mbf{u}(x,y,z,t) = (u_\mathrm{x}(x,y,t), u_\mathrm{y}(x,y,t), z\,\gamma(x,y,t))
\end{equation}
where, with reference to the plane $z=0$,  the ``horizontal'' velocity $\mbf{u}_\mathrm{h} := (u_\mathrm{x},u_\mathrm{y})$ is determined uniquely via the Helmholtz decomposition from the equations 
\begin{equation}
\label{eq:Helm}
\nabla_\mathrm{h} \cdot \mbf{u}_\mathrm{h} := \frac{\partial u_\mathrm{x}}{\partial x} + \frac{\partial u_\mathrm{y}}{\partial y} = -\gamma(x,y,t)\,, \qquad\qquad \nabla_\mathrm{h} \times \mbf{u}_\mathrm{h}:= \frac{\partial u_\mathrm{y}}{\partial x} - \frac{\partial u_\mathrm{x}}{\partial y} = \omega(x,y,t),
\end{equation}
where the scalar functions $\omega$ and $\gamma$ are, respectively, the out-of-plane vorticity and its vorticity stretching rate. These two scalars are the fundamental fields of this solution and satisfy the system
\begin{eqnarray}
\label{eq:vort}
\frac{\partial}{\partial t} \omega + \mbf{u}_\mathrm{h} \cdot \nabla_\mathrm{h} \omega  &=&  \gamma \, \omega\,,\\
\label{eq:gamma}
\frac{\partial}{\partial t} \gamma + \mbf{u}_\mathrm{h} \cdot \nabla_\mathrm{h} \gamma  &=&  2 \langle  \gamma^2 \rangle - \gamma^2\,,
\end{eqnarray}
where 
$\langle  f(\cdot,t) \rangle$ denotes the average of $f(x,y,t)$ over $(x,y)\in\mathbb{T}^2$ and is thus a function of time only. Equations \eqref{eq:Helm} imply $\langle  \gamma \rangle = \langle  \omega \rangle= 0$ at all times, which is consistent with \eqref{eq:vort}--\eqref{eq:gamma}. 

Due to the linear dependence of the $z$-component of the velocity on the $z$ coordinate, solutions of the form \eqref{eq:u_Gibbon} were termed ``of stagnation-point type'' by Ohkitani and Gibbon \cite{Ohkitani20003181}. As explained there, these solutions include well known and important subclasses such as the Burgers vortex \cite{burgers1948mathematical} and solutions found by Stuart \cite{stuart1987nonliear} and Childress \emph{et al.} \cite{childress1989blow}. Physically, stagnation points are interesting as they play a key role in practical situations where the flow is impinging perpendicularly on a wall, so particle pathlines must change their tangent vectors from normal to parallel to the wall and the velocity field is zero at one point, called ``stagnation point''. In our context, the wall is replaced with the symmetry plane $z=0$. On that plane, our velocity field is clearly parallel to the plane as its $z$-component goes to zero there, so in the generic case of non-uniform smooth velocity fields on $\mathbb{T}^2$, stagnation points may exist on the plane.

Because \eqref{eq:u_Gibbon} is an exact solution of \eqref{eq:Euler}, the full 3D vorticity 
\begin{equation}
\label{eq:vort_Gibbon}
\bomega = \nabla \times \mbf{u} = \left(z \frac{\partial \gamma}{\partial y}, -z \frac{\partial \gamma}{\partial x}, \omega \right)\,,
\end{equation}
is an infinitesimal symmetry for the non-autonomous system \eqref{eq:ODE}. Let us try to find another symmetry for this system, of the form $\etab = (0,0,\eta^{\mathrm{z}}(x,y,t))\,,$ where $\eta^{\mathrm{z}}$ is unknown. Plugging this into \eref{eq:symm_Lie} and using \eqref{eq:u_Gibbon} we get
\begin{equation}
\label{eq:eta_z}
 \partial_t \eta^{\mathrm{z}} + \mbf{u}_\mathrm{h} \cdot \nabla_\mathrm{h}\eta^{\mathrm{z}} - \gamma \, \eta^{\mathrm{z}} = 0\,.
\end{equation}
By inspection and comparing with \eqref{eq:vort} we see that a solution is $\eta^{\mathrm{z}} = \omega$, giving the symmetry
\begin{eqnarray}
 \label{eq:symm_omega}
\etab_1(x,y,t) := (0,0,\omega(x,y,t))\,.
\end{eqnarray}
This is the out-of-plane vorticity. Thus, by linearity, the in-plane vorticity is a symmetry:
\begin{eqnarray}
 \label{eq:symm_hor_lambda=0}
\etab_2(x,y,z,t):= \bomega(x,y,z,t) - \etab_1(x,y,t) =  \left(z \frac{\partial \gamma}{\partial y}, -z \frac{\partial \gamma}{\partial x}, 0 \right)\,.
\end{eqnarray}
From theorem \ref{thm:construct_symm}, successive Lie derivatives of $\etab_1$ along $\etab_2$ produces new symmetries:
\begin{equation}
 \label{eq:Lie_omega_Lie_omega_bomega}
\etab_3:=\Lie{\etab_1} \etab_2 = \left(\omega\,\frac{\partial \gamma}{\partial y}, - \omega\,\frac{\partial \gamma}{\partial x},
{z}\,\nabla_\mathrm{h}\gamma \times \nabla_\mathrm{h} \omega \right), \qquad
\etab_4:={\Lie{\etab_1}}\etab_3 =  \left(0,0, 2\,\omega\,\nabla_\mathrm{h}\gamma \times \nabla_\mathrm{h} \omega\right),
\end{equation}
with $\Lie{\etab_1} \etab_4= \mbf{0}\,,$ so ${\etab_1}$ and $\etab_4$ commute and are collinear (see equations  \eref{eq:Lie_omega_Lie_omega_bomega} and \eref{eq:symm_omega}), which by virtue of theorem \ref{thm:construct_CoM} gives the constant of motion $ C_1(x,y,t) :=  \nabla_\mathrm{h}\gamma \times \nabla_\mathrm{h} \omega\,.$

These results relate the fields $\gamma$ and $\omega$. In order to solve the evolution equations, it would be desirable to find an infinitesimal symmetry depending on the field $\gamma$ only. Let us try to find again a symmetry $\etab = (0,0,\eta^{\mathrm{z}}(x,y,t))\,,$ but this time look for solutions of equation \eref{eq:eta_z} of the form
\begin{equation}
\label{eq:ansatz}
\eta^{\mathrm{z}}(x,y,t) = \left(A(t) \, \gamma(x,y,t) + B(t)\right)^k\,,
\end{equation}
 where $k$ is a constant to be determined and $A(t), B(t)$ are functions of time to be determined. Replacing we get:
$k\, \dot A \,\gamma + k\,\dot B  + \left(\partial_t \gamma + \mbf{u}_\mathrm{h} \cdot \nabla_\mathrm{h} \gamma\right)\,k\,A  - \left(A \, \gamma + B\right) \, \gamma  = 0\,,
$ and using equation \eref{eq:gamma} we obtain
$k\, \dot A \,\gamma + k\,\dot B  + (2 \langle  \gamma^2 \rangle - \gamma^2)\,k\,A  - \left(A \, \gamma + B\right) \, \gamma  = 0$, which we can solve by eliminating the coefficients of $1$, $\gamma$ and $\gamma^2$ to get the system
\begin{equation*}
 - (k + 1)\,A = 0\,, \qquad k\,\dot A - B = 0\,,\qquad  k\,\dot B  + 2\,k \langle  \gamma^2 \rangle\,A = 0\,,
\end{equation*}
which gives the solution $k=-1, \quad B = - \dot A,$ and
$\ddot A - 2 \langle  \gamma^2 \rangle\,A = 0$,
so the new symmetry is
\begin{equation}
\label{eq:etab_5}
 \etab_5(x,y,t) = \left(0, 0, \frac{1}{A(t) \, \gamma(x,y,t) - \dot A(t)}\right)\,, \qquad \text{where} \qquad \ddot A - 2 \langle  \gamma^2 \rangle\,A = 0\,.
\end{equation}
As $\etab_5$ and $\etab_1$ are collinear, theorem \ref{thm:construct_CoM} gives the constant of motion $C_2(x,y,t):=\omega(x,y,t)(A(t) \, \gamma(x,y,t) - \dot A(t))$.
We want to find a constant of motion involving the field $\gamma$ only. So far the only two symmetries that depend just on $\gamma$  are $\etab_2$ and $\etab_5$.
Their Lie bracket is
\begin{equation}
\label{eq:etab_6}
\etab_6:=\Lie{\etab_5}\etab_2 = \frac{1}{A(t) \, \gamma(x,y,t) - \dot A(t)}\left(\frac{\partial \gamma}{\partial y}, - \frac{\partial \gamma}{\partial x}, 0 \right),
\end{equation}
proportional to $\etab_2$. Thus, theorem \ref{thm:construct_CoM} gives the constant of motion $C_3(x,y,z,t):= z (A(t) \, \gamma(x,y,t) - \dot A(t))$.  This constant determines the out-of-plane pathlines. Finally, notice that $\etab_6$ as well as $\etab_5$ do not depend on $z$. It is direct to check that they commute. As these symmetries are not proportional, theorem \ref{thm:construct_CoM}(i) allows for the construction of a new constant of motion such that
$\etab_6 \times \etab_5 = \nabla C_4$. We get, preliminarily,
$\nabla C_4 =  -{(A(t) \, \gamma(x,y,t) - \dot A(t))^{-2}} \left( \frac{\partial \gamma}{\partial x}, \frac{\partial \gamma}{\partial y}, 0\right)$, which is integrated to give
$C_4(x,y,t) =f(t)+ (A(t)[A(t) \, \gamma(x,y,t) - \dot A(t)])^{-1}$,
where $f(t)$ is to be determined. Using \eqref{eq:CoM} and the second equation in \eqref{eq:etab_5} we get $f'(t) = -[A(t)]^{-2}$. Thus
$$C_4(x,y,t) = \frac{1}{A(t)[A(t) \, \gamma(x,y,t) - \dot A(t)]} - \int_0^t \frac{1}{[A(s)]^2}~\mathrm{d}s\,.$$
This constant of motion allows one to solve for $\gamma$ along the pathlines and establish conditions for blowup, in a result that is equivalent to the solution presented in \cite{constantin2000euler} (see also \cite{mulungye2016lambda_0}). The constant of motion $C_2$ is then used to solve for $\omega$ along the pathlines. We will not go further in the solution here because this is a particular case of a  general model that we study in detail in the next section.

\section{A new family of models of 3D Euler flows of stagnation-point type: continuous symmetries and proof of finite-time blowup}
\label{sec:4}

The exact solution of the 3D Euler fluid equation considered in the previous subsection is a limited model for a general solution that has a symmetry plane at $z=0$: in a more general solution such as the one obtained from a numerical simulation on a domain $\Omega=\mathbb{T}^3$ \cite{bustamante2012interplay, Graf08, BusKerr08, HouLi08}, due to periodicity the velocity is not a linear function of $z$ anymore, and so the ansatz \eqref{eq:u_Gibbon}--\eqref{eq:gamma} will fail at reproducing the actual behaviour of the fields at the symmetry plane. 
In particular, it fails to predict modulations of the curvature of the pressure field, resulting in that $\left.\frac{\partial^2 p}{\partial z^2}\right|_{z=0}$ is a function of time only, not of the in-plane coordinates $(x,y)$. For this reason, a minimal generalisation of this model was introduced by \cite{mulungye2015lambda_m3/2}, so that this pressure curvature would show some modulation on the plane, while maintaining the Lie structure of infinitesimal symmetries for the model.  One can motivate this model in two seemingly different but equivalent ways: first, as done in \cite{mulungye2015lambda_m3/2}, we renounce the search for an exact solution of the 3D Euler equations on the whole spatial domain, dropping the ansatz \eqref{eq:u_Gibbon}, and simply focus on the 3D Euler equations restricted to the symmetry plane $z=0$, where the required closure for the term $\left.\frac{\partial^2 p}{\partial z^2}\right|_{z=0}$ is constructed via a ``sparse'' modification, as in SINDy models \cite{rudy2017data}: $\left.\frac{\partial^2 p}{\partial z^2}\right|_{z=0} := - 2 \langle \gamma^2\rangle + \lambda (\gamma^2 - \langle \gamma^2\rangle)$, with a free \emph{dimensionless} parameter $\lambda \in \mathbb{R}$. The horizontal velocity field $\mbf{u}_\mathrm{h} := (u_\mathrm{x}, u_\mathrm{y})|_{z=0}$ is the relevant unknown. This leads to equations \eqref{eq:Helm}--\eqref{eq:vort} as before, while the scalar function $\gamma$ satisfies now
\begin{eqnarray}
\frac{\partial}{\partial t} \gamma + \mbf{u}_\mathrm{h} \cdot \nabla_\mathrm{h} \gamma  &=&  2 \langle  \gamma^2 \rangle - \gamma^2 + \lambda \left(\langle  \gamma^2 \rangle -\gamma^2 \right)
\label{eq:gamma_lambda}
\end{eqnarray}
where, as usual,
$\langle  f(\cdot,t) \rangle$ denotes the average of $f(x,y,t)$ over $(x,y)\in\mathbb{T}^2$, and $\langle \gamma\rangle=\langle \omega\rangle = 0$. Notice that the case $\lambda=0$ corresponds to the exact solution of the 3D Euler equations discussed in the previous subsection. In this first motivation, the main equations of the model are  \eqref{eq:Helm}, \eqref{eq:vort} and \eqref{eq:gamma_lambda}. They can be solved along pathlines on the symmetry plane $(x,y)\in\mathbb{T}^2$ using continuous symmetry methods. However, in this motivation the role of the $z$ coordinate is unclear. Thus we introduce and adopt a second, new motivation, completely equivalent to the first one from the viewpoint of equations \eqref{eq:Helm}, \eqref{eq:vort} and \eqref{eq:gamma_lambda}, but where we insist on an exact solution, this time of modified 3D Euler equations on the whole spatial domain $\Omega = \mathbb{T}^2 \times \mathbb{R}$:
\begin{equation}
 \label{eq:model}
\frac{\partial}{\partial t} \mbf{u} + \mbf{u} \cdot \nabla \mbf{u} = - \nabla p - \hat{\mbf{e}}_\mathrm{z}\,\lambda \,u_\mathrm{z} \frac{\partial u_\mathrm{z}}{\partial z}\,, \qquad \nabla \cdot \mbf{u} = 0\,,
\end{equation}
where $\hat{\mbf{e}}_\mathrm{z}$ is the unit vector in direction $z$. 
An exact solution of this new model is given by ansatz \eqref{eq:u_Gibbon}--\eqref{eq:vort} along with \eqref{eq:gamma_lambda}, and the case $\lambda=0$ corresponds exactly to the original 3D Euler equations. 
The new term in  \eqref{eq:model} is a continuous deformation of the convective derivative term, a familiar procedure in more tractable 1D fluid models \cite{morlet1998further, okamoto2000some, okamoto2008generalization, matsumoto2016one}. It keeps the variational structure intact (e.g., symmetries and constants of motion), which allows for the study of the role of initial conditions on the finite-time singularity of the solutions, even for the original problem $\lambda=0$.


\subsection{Symmetries and constants of motion for the new model}
The new model comes at a price: the full vorticity $\bomega = \nabla \times \mbf{u}$ is not an infinitesimal symmetry anymore because the extra forcing term is not a pure gradient. But all other symmetries found in the previous subsection persist in the new model, perhaps with a dependence on the parameter $\lambda$, and the method for finding these symmetries is quite similar. The relevant symmetries are:
\begin{eqnarray}
\label{eq:symm_vort}
\etab_1(x,y,t) = (0,0,\omega(x,y,t)), \qquad \etab_2(x,y,z,t) = \left(z^{2\,\lambda+1} \,\frac{\partial \gamma}{\partial y}, - z^{2\,\lambda+1} \,\frac{\partial \gamma}{\partial x}, 0  \right)\,,\\
 \label{eq:symm_Gibbon_model}
  \etab_5 (x,y,t) = \left(0,0,\left(A(t) \,\gamma(x,y,t) - \frac{\dot A(t)}{1+\lambda}\right)^{-\frac{1}{1+\lambda}}\right)\,, \qquad \lambda \neq -1\,,\\
\label{eq:ODE_A_model}
 \ddot{A}(t) - (\lambda+1)(\lambda+2) \langle  \gamma^2 \rangle \,A(t)= 0\,.
\end{eqnarray}
From here on we assume $\lambda \neq -1$. The case $\lambda=-1$ has an elementary solution and will be omitted.

As $\etab_1$ and $\etab_5$ are collinear, we obtain the constant of motion
$$C_2(x,y,t) := \omega(x,y,t) \left(A(t) \,\gamma(x,y,t) - \frac{\dot A(t)}{1+\lambda}\right)^{\frac{1}{1+\lambda}}\,,\qquad \lambda \neq -1\,.
$$
Also, the Lie derivative of $\etab_2$ along $\etab_5$ is another symmetry (from theorem \ref{thm:construct_symm}) and satisfies 
$\Lie{\etab_5} \etab_2 = (2\,\lambda + 1) \,z^{-1} \left(A(t) \,\gamma(x,y,t) - \frac{\dot A(t)}{1+\lambda}\right)^{-\frac{1}{1+\lambda}} \,\etab_2$. As this is proportional to $\etab_2$, from theorem \ref{thm:construct_CoM}(ii) we get the constant of motion (also valid for $\lambda = -1/2$)
\begin{equation}
\nonumber
C_3(x,y,z,t) := z\,\left(A(t) \,\gamma(x,y,t) - \frac{\dot A(t)}{1+\lambda}\right)^{\frac{1}{1+\lambda}} \,,\qquad \lambda \neq -1\,.
\end{equation}
Defining now the re-scaled symmetry
${\etab}_6(x,y,z,t) := \left[C_3(x,y,z,t)\right]^{-1-2\,\lambda}\,\etab_2(x,y,z,t)$, we obtain $\Lie{\etab_5} \etab_6 = \mbf{0}$. As $\etab_5$ and $\etab_6$ are not collinear, theorem \ref{thm:construct_CoM}(i) gives
$\etab_6 \times \etab_5 = \nabla C_4\,,$ where 
\begin{equation*}
C_4(x,y,t) = \frac{1}{A(t) \left(A(t) \,\gamma(x,y,t) - \frac{\dot A(t)}{1+\lambda}\right)} + f(t)\,,
\end{equation*}
and $f(t)$ is to be determined. To determine it we simply compute the convective time derivative of $C_4$ using \eqref{eq:CoM}, \eqref{eq:u_Gibbon} and \eref{eq:ODE_A_model}, giving $\dot f(t) + (\lambda + 1)/(A(t))^2 = 0$, which is integrated to yield the final form of the constant of motion $C_4$:
\begin{equation}
\label{eq:D_model}
 C_4(x,y,t) = \frac{1}{A(t) \left(A(t) \,\gamma(x,y,t) - \frac{\dot A(t)}{1+\lambda}\right)} -  (\lambda + 1) \int_0^t \frac{1}{[A(s)]^2}~\mathrm{d}s\,, \qquad \lambda \neq -1\,.
\end{equation}

\subsection{Solving for the fields along the pathlines}

The constant of motion \eref{eq:D_model} allows one to solve for $\gamma$ along pathlines $(X(t),Y(t),Z(t))$, solutions of the system \eref{eq:ODE}. Setting initial conditions at $t=0$ and noting that $A$ satisfies the second-order ODE \eref{eq:ODE_A_model}, we set for simplicity $A(0) = 1, \quad \dot A(0) = 0\,.$ Defining $\gamma_0(x,y) = \gamma(x,y,0)$ we get:
$C_4(X(t),Y(t),t) = C_4(X_0,Y_0,0) = 1/\gamma_0(X_0,Y_0)\,, $ and solving for $\gamma(X(t),Y(t),t)$ we get
\begin{equation}
 \label{eq:sol_gamma}
\gamma(X(t),Y(t),t) = \frac{\gamma_0(X_0,Y_0)}{A(t)^2\left(1 + (\lambda+1)\,\gamma_0(X_0,Y_0) \int_0^t\frac{1}{A(s)^2}\mathrm{d}s \right)} + \frac{\dot A(t)}{(\lambda+1)\,A(t)}\,,\quad \lambda \neq -1\,.
\end{equation}
This solution either blows up in a finite time or remains regular forever, depending on $\lambda$ and $\gamma_0$.

Evaluating $C_2(x,y,t)$ at the pathline at $t=0$ we get $C_2(x,y,t) = \omega_0(X_0,Y_0) \left[\gamma_0(X_0,Y_0)\right]^{\frac{1}{1+\lambda}}$ and using equation \eref{eq:D_model} to eliminate $\gamma$ we obtain
\begin{equation}
 \label{eq:sol_omega}
\omega(X(t),Y(t),t) = \omega_0(X_0,Y_0) \left[A(t)\,\left(1 + (\lambda+1)\,\gamma_0(X_0,Y_0) \int_0^t\frac{1}{A(s)^2}\mathrm{d}s \right)\right]^{\frac{1}{\lambda+1}}\,.
\end{equation}

A crucial quantity is the Jacobian of the back-to-labels transformation in the plane $z=0$ parameterised by $(x,y) \in \mathbb{T}^2.$  We define
\begin{equation*}
 J(t;X_0,Y_0) = \det\left(\frac{\partial(X(t),Y(t))}{\partial(X_0,Y_0)}\right)\,,
\end{equation*}
where $X(t),Y(t)$ are the $x$- and $y$-components of the pathline with initial positions $X_0, Y_0.$ A computation of the area of the 2-torus $\mathbb{T}^2$ gives an identity to be satisfied by the Jacobian:
$\int_{\mathbb{T}^2}\mathrm{d}x\mathrm{d}y = 4\pi^2 = \int_{\mathbb{T}^2} J(t;X_0,Y_0)\mathrm{d}X_0\mathrm{d}Y_0\,,$ or in terms of averages:
\begin{equation}
 \label{eq:aver_J}
\langle J(t;\cdot)\rangle_0 = 1\,,
\end{equation}
where $\langle  f \rangle_0$ denotes an average over the `label' variables $X_0,Y_0$. Noticing that $J$ is positive as long as there is no blowup, we can interpret $J(t;X_0,Y_0)$ as a probability density on the 2-torus.

From the fact that $\nabla_\mathrm{h} \cdot \mbf{u}_\mathrm{h} = -\gamma$ we readily obtain an evolution equation for $J$:
\begin{equation*}
\frac{ \dot J(t;X_0,Y_0)}{J(t;X_0,Y_0)} = -  \gamma(X(t),Y(t),t) \quad \Longrightarrow \quad J(t;X_0,Y_0) = \exp\left({-\int_0^t \gamma(X(s),Y(s),s)\mathrm{d}s}\right)\,,
\end{equation*}
and using equation \eref{eq:sol_gamma} we get 
$J(t;X_0,Y_0) = \left[A(t)\,\left(1 + (\lambda+1)\,\gamma_0(X_0,Y_0) \int_0^t\frac{1}{A(s)^2}ds \right)\right]^{-\frac{1}{\lambda+1}}$.
Finally, replacing this solution into condition \eref{eq:aver_J} and defining the increasing function
\begin{equation}
\label{eq:A_from_S}
S(t) =  \int_0^t\frac{1}{A(s)^2}\mathrm{d}s  \qquad \Longrightarrow \qquad A(t) = [\dot S(t) ]^{-1/2}\,,
\end{equation}
we get a remarkable first-order ODE for $S(t)$:
\begin{equation}
 \label{eq:formula_S}
 \dot S(t)  = \left\langle \bigg[1 + (\lambda+1)\,\gamma_0(\cdot) \,S(t)\bigg]^{-\frac{1}{\lambda+1}}\right\rangle_{0}^{-2(\lambda+1)}\,,\quad S(0)=0\,.
\end{equation}
Once the solution of this ODE is found, then $A(t)$ can be obtained and so the basic fields $\gamma, \omega$ can be found explicitly along the pathlines. This ODE contains all the needed information to determine whether a blowup occurs in a finite time. The initial data for $\omega$ plays no role.

\subsection{Singularity time and blowup asymptotics}
Combining Eqs.~\eqref{eq:sol_gamma}--\eref{eq:sol_omega} with Eq.~\eqref{eq:A_from_S} we obtain $\gamma$ and $\omega$ along the pathlines in terms of $S(t)$:
\begin{eqnarray}
\label{eq:sol_gamma_S}
 \gamma(X(t),Y(t),t) &=&  \frac{\dot S(t) \,\gamma_0(X_0,Y_0)}{1 + (\lambda+1)\,\gamma_0(X_0,Y_0) \,S(t)} - \frac{1}{2(\lambda+1)} \frac{\ddot S(t)}{\dot S(t)}\,,\\
\label{eq:sol_omega_S} \omega(X(t),Y(t),t) &=& \omega_0(X_0,Y_0) \left[\frac{1 + (\lambda+1)\,\gamma_0(X_0,Y_0) \,S(t)}{\dot S(t)^{1/2}}\right]^{\frac{1}{\lambda+1}}\,.
\end{eqnarray}
A singularity occurs at the first time $T_*$ when $1 + (\lambda+1)\,\gamma_0(X_0,Y_0) \,S(T_*)=0$ for some $(X_0,Y_0) \in \mathbb{T}^2$, because $\dot S \geq 0$. The behaviour of $\dot S$ and $\ddot S$ at $T_*$ may affect the blowup asymptotics. From \eqref{eq:sol_gamma_S} we get $d\gamma/d\gamma_0 > 0$, and thus the level sets of $\gamma$ keep the same ordering at all times. Notice that system \eqref{eq:formula_S}--\eqref{eq:sol_omega_S} combined with \eqref{eq:Helm} to solve for the in-plane velocity, allows for general (not necessarily continuous) initial distributions $\gamma_0, \omega_0$: bounded and of zero mean on $\mathbb{T}^2$.
\begin{theorem}[Singularity time]
Let $\gamma_0$ be nonzero, bounded on $\mathbb{T}^2$ and of zero mean. Then there is a first time $T_\mathrm{*} \in (0,\infty) \cup \{\infty\}$ at which a singularity occurs, defined by the condition
\begin{equation}
\label{eq:singularity_time_1}
S(T_*) = S_*\,,\qquad 
\text{where} \qquad 
 0< \frac{1}{S_*} := -(\lambda+1)\begin{cases}\displaystyle {\sup_{(X_0,Y_0) \in \mathbb{T}^2}{\gamma_0(X_0,Y_0)}},& \lambda  < - 1\,,\\
						   \displaystyle  \inf_{(X_0,Y_0) \in \mathbb{T}^2}{\gamma_0(X_0,Y_0)},& -1 < \lambda\,.
                             \end{cases}
\end{equation}
Moreover, an explicit formula for this singularity time (or blowup time) is
\begin{equation}
 \label{eq:T_sing}
 T_* = \int_0^{S_*} \left\langle \bigg[1 + (\lambda+1)\,\gamma_0(\cdot) \,S\bigg]^{-\frac{1}{\lambda+1}}\right\rangle_{0}^{2(\lambda+1)}\,\mathrm{d}S\,,
\end{equation}
\end{theorem}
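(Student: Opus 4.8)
The plan is to reduce the statement to a careful analysis of the scalar initial value problem \eqref{eq:formula_S} for $S(t)$, since the fields $\gamma$, $\omega$ and the Jacobian $J$ have already been expressed through $S$ in \eqref{eq:sol_gamma_S}--\eqref{eq:sol_omega_S} and \eqref{eq:A_from_S}. Writing $g(S):=\left\langle [1+(\lambda+1)\,\gamma_0(\cdot)\,S]^{-1/(\lambda+1)}\right\rangle_0$, equation \eqref{eq:formula_S} reads $\dot S=g(S)^{-2(\lambda+1)}$, $S(0)=0$. The first step is to verify that $S_*$ in \eqref{eq:singularity_time_1} is a well-defined number in $(0,\infty)$: since $\gamma_0$ is bounded, nonzero and of zero mean, its (essential) infimum is strictly negative and its (essential) supremum strictly positive — a one-signed zero-mean function would vanish a.e. — so in either regime $\lambda<-1$ or $\lambda>-1$ the right-hand side of \eqref{eq:singularity_time_1} is finite and positive. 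By construction $S_*$ is the smallest positive value of $S$ at which $1+(\lambda+1)\,\gamma_0(X_0,Y_0)\,S$ vanishes for some label, and for $0\le S<S_*$ this affine-in-$S$ factor stays between two strictly positive constants uniformly over $\mathbb{T}^2$.

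The second step is local well-posedness and monotonicity of $S$. On each compact subinterval of $[0,S_*)$, the integrand defining $g$ and its $S$-derivative are bounded above and below in modulus by positive constants (boundedness of $\gamma_0$ plus the two-sided bound on the affine factor), so $g$ is $C^1$ and strictly positive there, with $g(0)=1$; hence $F(S):=g(S)^{-2(\lambda+1)}$ is locally Lipschitz and strictly positive on $[0,S_*)$. Standard ODE theory gives a unique solution $S(t)$ on a maximal interval $[0,T_{\max})$, strictly increasing since $\dot S=F(S)>0$, which can only leave $[0,S_*)$ by approaching $S_*$.

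The third step is separation of variables: $\mathrm{d}t=\mathrm{d}S/F(S)=g(S)^{2(\lambda+1)}\,\mathrm{d}S$ integrates to $\Phi(S(t))=t$, where $\Phi(S):=\int_0^S g(\sigma)^{2(\lambda+1)}\,\mathrm{d}\sigma$ is continuous, strictly increasing, $\Phi(0)=0$. Thus $S(t)=\Phi^{-1}(t)$ on $[0,T_*)$, with $T_*:=\lim_{S\uparrow S_*}\Phi(S)=\int_0^{S_*}g(\sigma)^{2(\lambda+1)}\,\mathrm{d}\sigma\in(0,\infty]$ (positive because $g^{2(\lambda+1)}$ is continuous and positive near $\sigma=0$), which is precisely formula \eqref{eq:T_sing}; moreover $S(t)\uparrow S_*$ as $t\uparrow T_*$. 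Finally, from \eqref{eq:sol_gamma_S}: for every $t<T_*$ (equivalently $S<S_*$) the denominator $1+(\lambda+1)\,\gamma_0\,S$ is bounded away from zero on $\mathbb{T}^2$, so $\gamma$, $\omega$ and $J$ are finite; as $t\to T_*$ this denominator vanishes along the pathline issuing from a label where $\gamma_0$ attains the relevant extremum, so $\gamma$, and hence $\mbf{u}$, ceases to be smooth. Hence $T_*$ is the first time a singularity occurs, finite precisely when the integral in \eqref{eq:T_sing} converges, consistent with $T_*\in(0,\infty)\cup\{\infty\}$.

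The main obstacle I anticipate is twofold: controlling the average $g(S)^{2(\lambda+1)}$ near $S=S_*$ so that the value of $T_*$ is genuinely the integral \eqref{eq:T_sing} and is not cut short by non-integrability at an interior point (which the uniform two-sided bound on $[0,S_*)$ in fact rules out), and confirming that no singularity can form \emph{before} $S$ reaches $S_*$ — i.e. that the full velocity field reconstructed from $\gamma,\omega$ via the Helmholtz system \eqref{eq:Helm} stays regular for $S<S_*$, given that $\gamma_0,\omega_0$ are here only assumed bounded with zero mean. The finer blowup asymptotics of $\gamma$, $\omega$, $J$ at $T_*$, which depend on the behaviour of $\dot S$ and $\ddot S$ there, are deferred to the subsequent analysis.
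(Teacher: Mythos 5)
Your proposal is correct and follows essentially the same route as the paper's proof: establish that $S_*$ is a well-defined positive number because the zero-mean bounded $\gamma_0$ has a strictly positive supremum and strictly negative infimum, observe that $\dot S>0$ while $S<S_*$, and integrate \eqref{eq:formula_S} by separation of variables to obtain \eqref{eq:T_sing}. Your additional care about local Lipschitz well-posedness on $[0,S_*)$ and the explicit handling of the case $T_*=\infty$ only makes explicit what the paper leaves implicit.
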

\begin{proof}
Because $\gamma_0$ is nonzero, bounded and of zero mean, it has a positive global maximum and a negative global minimum on $\mathbb{T}^2.$ Noting that $S(0)=0$, it follows from \eqref{eq:formula_S}--\eqref{eq:singularity_time_1} that $\dot S(t) > 0$ when $S(t)<S_*$ and $\lambda  \neq -1\,.$ Thus, eventually $S(t)=S_*$ for some $t = T_*>0$, which we call the singularity time. Integrating the ODE \eref{eq:formula_S} using separation of variables we obtain \eqref{eq:T_sing}.
\end{proof}

\begin{remark}
We have in general $S_* < \infty$ when $\lambda \neq -1,$ but we do not know whether $T_*$ is finite or infinite: this will depend on the type of initial condition as well as on the value of $\lambda.$ The singularity time $T_\mathrm{*}$ is finite if and only if the integral \eqref{eq:T_sing} converges. Thus, in order to determine the singular behaviour of the system, we need to look at the behaviour of the solution of the ODE \eref{eq:formula_S} near $S=S_*$. 
\end{remark}

The following results will help us identify and classify some finite-time singularity cases:

\begin{lemma}
\label{lem:general_blowup}
{If $S(t)$ satisfies $\displaystyle \lim_{t \to T_*} \dot S(t) > 0$ then $T_* < \infty$, i.e.~the singularity occurs at a finite time.}
\end{lemma}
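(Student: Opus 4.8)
The plan is to argue by contradiction, using that $S$ is monotone increasing up to $T_*$ and that $S_*<\infty$ whenever $\lambda\neq-1$ (as recorded in the Remark following the Singularity time theorem).

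First I would recall, from the proof of that theorem, that $\dot S(t)>0$ for every $t$ with $S(t)<S_*$; hence $S$ is strictly increasing on $[0,T_*)$ and, by the very definition of $T_*$ as the first time at which $S=S_*$, one has $S(t)\uparrow S_*$ as $t\uparrow T_*$. Now suppose, for contradiction, that $T_*=\infty$. Then $S$ is defined and strictly increasing on all of $[0,\infty)$, stays strictly below $S_*<\infty$, and $\lim_{t\to\infty}S(t)=S_*$.

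Next I would invoke the hypothesis. Put $L:=\lim_{t\to T_*}\dot S(t)=\lim_{t\to\infty}\dot S(t)>0$. By definition of the limit there is a $t_0\ge 0$ with $\dot S(t)>L/2$ for all $t\ge t_0$. Integrating from $t_0$ to $t$ gives $S(t)>S(t_0)+\frac{L}{2}(t-t_0)\to\infty$ as $t\to\infty$, contradicting $S(t)<S_*<\infty$. Therefore $T_*<\infty$, as claimed.

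As an alternative route one can read the conclusion straight off the explicit blowup-time formula \eqref{eq:T_sing}: writing $g(S):=\big\langle[1+(\lambda+1)\gamma_0(\cdot)S]^{-1/(\lambda+1)}\big\rangle_0^{2(\lambda+1)}$, the ODE \eqref{eq:formula_S} is $\dot S=1/g(S)$, so the hypothesis amounts to $g(S)\to 1/L<\infty$ as $S\to S_*^-$; since $g$ is continuous and positive on $[0,S_*)$ and bounded near $S_*$, it is bounded on all of $[0,S_*)$, hence integrable there, and $T_*=\int_0^{S_*}g(S)\,\mathrm{d}S<\infty$. There is no real obstacle in this proof; the only points needing a word of care are that $S_*$ is genuinely finite (which is where $\lambda\neq-1$ enters) and that, under the assumption $T_*=\infty$, $S(t)$ indeed remains below $S_*$ for all finite $t$, which is immediate from monotonicity and the definition of $T_*$.
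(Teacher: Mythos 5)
Your proposal is correct, and your ``alternative route'' is precisely the paper's own proof: $\dot S$ is the reciprocal of the integrand in \eqref{eq:T_sing}, so a positive limit of $\dot S$ at $T_*$ makes that integrand bounded near $S_*$, hence integrable over the finite interval $[0,S_*]$, giving $T_*<\infty$. Your primary argument is a genuinely different, more elementary phrasing of the same fact: instead of invoking the integral formula for $T_*$, you argue by contradiction that if $T_*=\infty$ then $\dot S(t)>L/2$ eventually forces $S(t)$ to grow without bound, contradicting $S(t)<S_*<\infty$. This buys you independence from the separation-of-variables formula \eqref{eq:T_sing} (you only use monotonicity of $S$, finiteness of $S_*$, and the definition of $T_*$), at the cost of being slightly less informative -- the paper's version simultaneously certifies that the explicit expression \eqref{eq:T_sing} is a convergent integral, which is used later. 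Both of your supporting observations (that $\lambda\neq-1$ is what guarantees $S_*<\infty$, and that $S(t)<S_*$ on $[0,T_*)$ by definition of $T_*$ as the first hitting time) are exactly the right points to flag; there is no gap.
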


\begin{proof}
  The function $\dot S(t)$ is nothing but the multiplicative inverse of the integrand in \eref{eq:T_sing}. By definition, $\dot S(t)>0$ if $0\leq t < T_*$. Thus if $\displaystyle \lim_{t \to T_*} \dot S(t) > 0$ then the integral converges.
\end{proof}

\begin{lemma}[Formula for $\ddot S(t)$] The second time derivative of the function $S(t)$, solution of \eqref{eq:formula_S}, is
\begin{equation}
\label{eq:ddot_S}
\ddot S(t) = 2(\lambda+1)\left[\dot S(t)\right]^{2+\frac{1}{2(\lambda+1)}}\left\langle \gamma_0(\cdot)  \bigg[1 + (\lambda+1)\,\gamma_0(\cdot) \,S(t)\bigg]^{-\frac{\lambda+2}{\lambda+1}}\right\rangle_{0}
\end{equation}
\end{lemma}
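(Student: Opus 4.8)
The plan is to differentiate the first-order ODE \eqref{eq:formula_S} directly with respect to $t$, using the chain rule, and then re-express the result back in terms of $\dot S(t)$ via the same ODE. First I would write the defining relation in the compact form $\dot S(t) = \Phi(S(t))^{-2(\lambda+1)}$, where $\Phi(S) := \left\langle [1 + (\lambda+1)\,\gamma_0(\cdot)\,S]^{-1/(\lambda+1)}\right\rangle_{0}$. Differentiating once more in $t$ gives $\ddot S(t) = -2(\lambda+1)\,\Phi(S)^{-2(\lambda+1)-1}\,\Phi'(S)\,\dot S(t)$, so everything reduces to computing $\Phi'(S)$.

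The next step is to differentiate $\Phi$ under the average sign (legitimate since $\gamma_0$ is bounded and $S < S_*$ keeps the bracket bounded away from zero). One finds $\Phi'(S) = -\frac{1}{\lambda+1}\left\langle \gamma_0(\cdot)\,(\lambda+1)\,[1+(\lambda+1)\gamma_0(\cdot)S]^{-1/(\lambda+1)-1}\right\rangle_0 = -\left\langle \gamma_0(\cdot)\,[1+(\lambda+1)\gamma_0(\cdot)S]^{-(\lambda+2)/(\lambda+1)}\right\rangle_0$, where the exponent simplifies because $\frac{1}{\lambda+1}+1 = \frac{\lambda+2}{\lambda+1}$. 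Substituting this into the expression for $\ddot S$ yields $\ddot S(t) = 2(\lambda+1)\,\Phi(S)^{-2(\lambda+1)-1}\,\dot S(t)\,\left\langle \gamma_0(\cdot)\,[\cdots]^{-(\lambda+2)/(\lambda+1)}\right\rangle_0$.

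The last step is cosmetic: eliminate $\Phi(S)$ in favour of $\dot S(t)$. Since $\dot S = \Phi^{-2(\lambda+1)}$, we have $\Phi^{-2(\lambda+1)-1} = \Phi^{-2(\lambda+1)}\cdot\Phi^{-1} = \dot S \cdot (\dot S)^{1/(2(\lambda+1))} = (\dot S)^{1 + 1/(2(\lambda+1))}$. Multiplying by the extra factor $\dot S(t)$ already present gives the prefactor $(\dot S)^{2 + 1/(2(\lambda+1))}$, matching \eqref{eq:ddot_S} exactly. I would then note that $\Phi(S) > 0$ on $[0,S_*)$ guarantees all fractional powers are well defined.

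There is no real obstacle here — the statement is a one-line calculus exercise once organised correctly. The only point requiring a word of justification is differentiating under the average (an integral over the fixed compact torus $\mathbb{T}^2$); this is immediate from dominated convergence given the boundedness of $\gamma_0$ and the strict positivity of $1+(\lambda+1)\gamma_0 S$ for $S\in[0,S_*)$. The mild bookkeeping nuisance is keeping track of the algebraic identity $\frac{1}{\lambda+1}+1=\frac{\lambda+2}{\lambda+1}$ and the exponent arithmetic $-2(\lambda+1)-1+1 \mapsto 2+\frac{1}{2(\lambda+1)}$ when converting powers of $\Phi$ into powers of $\dot S$, but this is routine.
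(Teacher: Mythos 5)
Your proposal is correct and is exactly the paper's argument: the paper's proof consists of the single sentence that the result follows by time differentiation of \eqref{eq:formula_S}, which is precisely the computation you carry out (your exponent bookkeeping checks out). The only addition you make is the remark about differentiating under the average, which is a harmless extra justification.
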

\begin{proof}
The result follows directly after time differentiation of equation \eqref{eq:formula_S}.
\end{proof}

\begin{definition}[Positions of supremum and infimum of $\gamma_0$]
\label{def:sup_inf_pos}
Let the initial condition $\gamma_0$ be nonzero, bounded on $\mathbb{T}^2$ and of zero mean. We define this function's supremum $\gamma_+$ and infimum $\gamma_-$, and their respective positions $(X_+,Y_+)$ and $(X_-,Y_-)$, as the (not necessarily unique) points on $\mathbb{T}^2$ such that
$$\gamma_+:=\gamma_0(X_+,Y_+) = \sup_{(X_0,Y_0) \in \mathbb{T}^2}{\gamma_0(X_0,Y_0)}\,,  \qquad \qquad \gamma_- :=\gamma_0(X_-,Y_-) = \inf_{(X_0,Y_0) \in \mathbb{T}^2}{\gamma_0(X_0,Y_0)}\,.$$
\end{definition}

As the nonnegative terms $[1 + (\lambda+1)\,\gamma_0(\cdot) \,S(t)]$, appearing in the averages over $\mathbb{T}^2$ in equations \eqref{eq:formula_S}, \eqref{eq:T_sing} and \eqref{eq:ddot_S}, are close to zero near the singularity time and near the position of supremum/infimum of $\gamma_0$, the key idea to understand singularity time and blowup asymptotics is to look at whether the relevant exponents are negative or not. In what follows, we call ``unconditional results'' those results that do not depend on the specific profile of $\gamma_0$ near the position of its supremum or infimum. We call ``conditional results'' everything else.

\subsection{Unconditional blowup results: singularity time and asymptotics}

\begin{theorem}[Unconditional finite-time blowup for $\lambda<-1$]
\label{thm:blowup_lambda<-1}
{Let $\lambda < -1$. Let the initial condition $\gamma_0$ be nonzero, bounded on $\mathbb{T}^2$ and of zero mean. Then the singularity occurs at a finite time, i.e., $T_* < \infty.$}
\end{theorem}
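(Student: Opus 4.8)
The plan is to show that along the solution $S(t)$ of the ODE \eqref{eq:formula_S} the derivative $\dot S(t)$ is bounded below by a strictly positive constant independent of $t$; since the singularity corresponds to $S$ attaining the finite value $S_*$ of \eqref{eq:singularity_time_1}, this immediately forces $T_* < \infty$. In effect this is a quantitative sharpening of Lemma~\ref{lem:general_blowup}.

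First I would fix the signs and notation. Since $\lambda < -1$ we have $\lambda+1 < 0$, so both exponents occurring in \eqref{eq:formula_S}, namely $-\tfrac{1}{\lambda+1}$ and $-2(\lambda+1)$, are strictly positive. Because $\gamma_0$ is bounded, nonzero and of zero mean, its supremum $\gamma_+$ is strictly positive and its infimum $\gamma_-$ is strictly negative; hence $S_* = -\big((\lambda+1)\gamma_+\big)^{-1}$ is finite and positive, and the set $N := \{(X_0,Y_0)\in\mathbb{T}^2 : \gamma_0(X_0,Y_0) \le 0\}$ has strictly positive measure (otherwise $\gamma_0 > 0$ almost everywhere, contradicting $\langle\gamma_0\rangle_0 = 0$). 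Write $c := \langle \mathbf{1}_N\rangle_0 \in (0,1)$.

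Next I would bound the average in \eqref{eq:formula_S} from below for every $t$ with $S(t) \in [0,S_*)$, i.e.\ every $t\in[0,T_*)$. On $N$ one has $(\lambda+1)\gamma_0(\cdot) \ge 0$ and $S(t)\ge 0$, so $1 + (\lambda+1)\gamma_0(\cdot)S(t) \ge 1$, and since the exponent $-\tfrac{1}{\lambda+1}$ is positive, $\big[1 + (\lambda+1)\gamma_0(\cdot)S(t)\big]^{-1/(\lambda+1)} \ge 1$ there. Off $N$, where $\gamma_0 > 0$, the base equals $1 + (\lambda+1)\gamma_0(\cdot)S(t) \ge 1 + (\lambda+1)\gamma_+ S(t) > 0$ for $S(t) < S_*$ (this positivity up to $S_*$ is exactly how $S_*$ was defined), so the integrand is $\ge 0$ there. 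Discarding the contribution off $N$ gives $\big\langle[1 + (\lambda+1)\gamma_0(\cdot)S(t)]^{-1/(\lambda+1)}\big\rangle_0 \ge c$, and raising to the positive power $-2(\lambda+1)$ and using \eqref{eq:formula_S} yields $\dot S(t) \ge c^{-2(\lambda+1)} > 0$ for all $t\in[0,T_*)$.

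Finally, integrating this differential inequality and using $S(0)=0$, $S(T_*)=S_*$ gives $S_* \ge c^{-2(\lambda+1)}\,T_*$, that is $T_* \le S_*\,c^{2(\lambda+1)} < \infty$, which is the assertion; equivalently, $\lim_{t\to T_*}\dot S(t) \ge c^{-2(\lambda+1)} > 0$ and Lemma~\ref{lem:general_blowup} applies directly. I do not anticipate a real obstacle here: the only points needing care are that the integrand in \eqref{eq:formula_S} stays nonnegative and real for $S < S_*$ (built into the definition \eqref{eq:singularity_time_1}) and that $|N| > 0$ (forced by the nonzero, zero-mean hypotheses on $\gamma_0$), both of which are elementary.
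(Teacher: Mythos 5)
Your proposal is correct and follows essentially the same route as the paper: both arguments rest on the observation that for $\lambda<-1$ the exponent $-\tfrac{1}{\lambda+1}$ is positive, so the average in \eqref{eq:formula_S} stays bounded and bounded away from zero up to $S=S_*$, whence $\dot S$ does not vanish and lemma \ref{lem:general_blowup} applies. Your version is a mild quantitative sharpening (an explicit lower bound $\dot S \ge c^{-2(\lambda+1)}$ via the positive-measure set $\{\gamma_0\le 0\}$, hence an explicit bound on $T_*$), but the underlying idea is the same.
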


\begin{proof}
 If $\lambda < -1$ and $\gamma_0$ is bounded then the integrand of the area integral in equation \eqref{eq:formula_S}, at $S=S_*$, is bounded and non-negative.  As $\gamma_0$ is of zero mean and bounded, this integrand is strictly positive in a subset of $\mathbb{T}^2$ of nonzero measure. Therefore the area integral is positive and bounded and thus $0 < \displaystyle \lim_{t \to T_*} \dot S(t)$. From lemma \ref{lem:general_blowup} this implies $T_* < \infty$. 
\end{proof}

\begin{theorem}[Unconditional blowup asymptotics for $\lambda < -1$] 
\label{thm:asymptotics_uncond}
Let $\lambda < -1$. Let the initial condition $\gamma_0$ be nonzero, bounded on $\mathbb{T}^2$ and of zero mean. Let $(X_+,Y_+)\in \mathbb{T}^2$ be any position of the supremum of $\gamma_0$. Then the stretching rate $\gamma$ blows up at time $T_*$ at the point $(X(T_*),Y(T_*)) \in \mathbb{T}^2,$ corresponding to the pathline with initial position $(X(0),Y(0)) = (X_+,Y_+)$. Explicitly we have:
$$\displaystyle \sup_{(x,y)\in \mathbb{T}^2} \gamma(x,y,t) \sim \gamma(X(t),Y(t)) \sim \frac{1}{|\lambda+1|} \left(T_*-t\right)^{-1}\,, \qquad \text{as} \qquad t \to T_*\,. $$
As for the out-of-plane vorticity $\omega,$ it blows up at time $T_*$ if and only if the initial vorticity satisfies $\omega_0(X_+,Y_+) \neq 0.$ The blowup occurs at the same point at which $\gamma$ blows up and we have, explicitly:
$$\displaystyle \sup_{(x,y)\in \mathbb{T}^2} |\omega(x,y,t)| \sim |\omega(X(t),Y(t))| \sim K_+\,\left(T_*-t\right)^{-\frac{1}{|\lambda+1|}}\,,\qquad \text{as} \qquad t \to T_*\,,$$
where $K_+ = \frac{{|\omega_0(X_+,Y_+)|}}{|\lambda+1|^{\frac{1}{|\lambda+1|}}} \left\langle \left[{\gamma_0(X_+,Y_+)} - {\gamma_0(\cdot)} \right]^{\frac{1}{|\lambda+1|}}\right\rangle_{0}^{-1}.$
At any other pathline not starting at the position of the supremum of $\gamma_0$, the stretching may blow up when $\lambda<-2$, but it would do so asymptotically more slowly than its supremum, while the vorticity at any other pathline is finite at all times.
\end{theorem}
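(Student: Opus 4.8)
The strategy is to reduce every assertion to the behaviour of the single scalar $S(t)$ near the singularity time $T_*$, and then read the asymptotics of $\gamma$ and $\omega$ off the closed-form expressions \eqref{eq:sol_gamma_S}--\eqref{eq:sol_omega_S}. Set $q:=1/|\lambda+1|>0$. First I would observe that the right-hand side of \eqref{eq:formula_S} is a continuous, strictly positive function of $S$ on the closed interval $[0,S_*]$: the integrand $[1+(\lambda+1)\gamma_0(\cdot)S]^{q}$ is non-negative and uniformly bounded there, and the average cannot vanish because $\gamma_0$ has zero mean. Hence, exactly as in the proof of Theorem~\ref{thm:blowup_lambda<-1}, $\dot S$ extends continuously to $t=T_*$ with $\dot S(T_*)=:L\in(0,\infty)$, so $S_*-S(t)=\int_t^{T_*}\dot S(s)\,\mathrm{d}s\sim L\,(T_*-t)$, and moreover $\dot S$ is pinched between two positive constants on all of $[0,T_*)$. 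Along the pathline issued from a supremum position $(X_+,Y_+)$ of $\gamma_0$ (Definition~\ref{def:sup_inf_pos}) I would record the elementary identity
\[
1+(\lambda+1)\,\gamma_+\,S(t)=\frac{S_*-S(t)}{S_*},
\]
immediate from $1+(\lambda+1)\gamma_+S_*=0$ (see \eqref{eq:singularity_time_1}) and $\gamma_+S_*=1/|\lambda+1|$, together with $L^{q/2}=\big\langle[1+(\lambda+1)\gamma_0(\cdot)S_*]^{q}\big\rangle_0=(|\lambda+1|S_*)^{q}\big\langle(\gamma_+-\gamma_0(\cdot))^{q}\big\rangle_0$.

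Inserting the identity into \eqref{eq:sol_gamma_S} at the label $(X_+,Y_+)$ makes the first term equal to $\dot S(t)/\big(|\lambda+1|(S_*-S(t))\big)\sim\frac{1}{|\lambda+1|}(T_*-t)^{-1}$. The crux is to show that the remaining, label-independent term $-\frac{1}{2(\lambda+1)}\,\ddot S/\dot S$ is $o\big((T_*-t)^{-1}\big)$; since $\dot S$ is bounded above and below near $T_*$, this reduces, via \eqref{eq:ddot_S} and the equality $-\tfrac{\lambda+2}{\lambda+1}=q-1$, to showing $(S_*-S)\,\big\langle\gamma_0(\cdot)\,[1+(\lambda+1)\gamma_0(\cdot)S]^{q-1}\big\rangle_0\to0$. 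For $-2\le\lambda<-1$ one has $q-1\ge0$, the average is bounded, and the claim is immediate. For $\lambda<-2$ (so $q-1\in(-1,0)$) I would split $\mathbb{T}^2$ into $\{\gamma_0\le0\}$, where $1+(\lambda+1)\gamma_0S\ge1$ forces the contribution to be $O(S_*-S)$, and $\{\gamma_0>0\}$, where the identity $1+(\lambda+1)\gamma_0S=|\lambda+1|\big[S_*(\gamma_+-\gamma_0)+\gamma_0(S_*-S)\big]\ge|\lambda+1|\,\gamma_0(S_*-S)$, combined with $q-1<0$, bounds the integrand by a constant times $\gamma_0^{\,q}(S_*-S)^{q-1}$, so the contribution is $O\big((S_*-S)^{q}\big)\to0$. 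This yields $\gamma(X(t),Y(t),t)\sim\frac{1}{|\lambda+1|}(T_*-t)^{-1}$.

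For $\sup_{(x,y)}\gamma$, the label enters \eqref{eq:sol_gamma_S} only through $\dot S\gamma_0/(1+(\lambda+1)\gamma_0S)$, whose $\gamma_0$-derivative equals $\dot S/(1+(\lambda+1)\gamma_0S)^2>0$; hence for each $t$ the supremum over $(x,y)$ is attained exactly on the supremum pathline and shares its asymptotics. Along any pathline with $\gamma_0<\gamma_+$ that same term stays bounded (its denominator is $\ge|\lambda+1|S_*(\gamma_+-\gamma_0)>0$ on $[0,S_*]$), while the common term $-\frac{1}{2(\lambda+1)}\ddot S/\dot S$ — which can be unbounded only when $\lambda<-2$ — is $o\big((T_*-t)^{-1}\big)$ by the estimate just made, so any blowup there is asymptotically slower. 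For $\omega$, \eqref{eq:sol_omega_S} with $1/(\lambda+1)=-q$ gives, on the supremum pathline, $\omega=\omega_0(X_+,Y_+)\big[\dot S^{1/2}S_*/(S_*-S)\big]^{q}\sim\omega_0(X_+,Y_+)(S_*/L^{1/2})^{q}(T_*-t)^{-q}$, which blows up iff $\omega_0(X_+,Y_+)\ne0$; substituting $L^{q/2}=(|\lambda+1|S_*)^{q}\langle(\gamma_+-\gamma_0)^{q}\rangle_0$ collapses the prefactor to the stated $K_+$. Combining $\sup_{(x,y)}|\omega|\le\|\omega_0\|_\infty\big[\dot S^{1/2}S_*/(S_*-S)\big]^{q}$ (using $1+(\lambda+1)\gamma_0S\ge1+(\lambda+1)\gamma_+S$) with the lower bound along the supremum pathline gives $\sup_{(x,y)}|\omega|\sim K_+(T_*-t)^{-q}$, the two matching once $(X_+,Y_+)$ is chosen to maximise $|\omega_0|$ over supremum positions (for continuous data). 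Finally, for $\gamma_0<\gamma_+$ both $\dot S$ and $(1+(\lambda+1)\gamma_0S)^{-1}$ stay bounded on $[0,T_*)$, so $\omega$ remains finite there.

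The only genuinely non-routine step is the estimate in the second paragraph, establishing that the $\ddot S/\dot S$ contribution is \emph{strictly} lower order than $(T_*-t)^{-1}$ uniformly in the admissible data $\gamma_0$. The delicate case is $\lambda<-2$, where the integrand in \eqref{eq:ddot_S} may itself blow up at the supremum position as $S\to S_*$ (e.g.\ when $\gamma_0$ attains $\gamma_+$ on a set of positive measure), and the splitting of $\mathbb{T}^2$ above is arranged precisely to absorb that singularity. Everything else is bookkeeping with \eqref{eq:sol_gamma_S}--\eqref{eq:sol_omega_S}, the monotonicity in $\gamma_0$, and the continuity of $\dot S$ up to $T_*$ inherited from Theorem~\ref{thm:blowup_lambda<-1}.
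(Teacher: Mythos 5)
Your proposal is correct and follows essentially the same route as the paper: establish $0<\dot S(T_*)<\infty$ from Theorem \ref{thm:blowup_lambda<-1}, linearise $S_*-S(t)\sim \dot S(T_*)(T_*-t)$, read the asymptotics off \eqref{eq:sol_gamma_S}--\eqref{eq:sol_omega_S}, and control the $\ddot S/\dot S$ term via \eqref{eq:ddot_S} with the case split at $\lambda=-2$. The only (cosmetic) difference is in the $\lambda<-2$ estimate, where the paper uses the single uniform pointwise bound $[1+(\lambda+1)\gamma_0 S]^{q-1}\le[1+(\lambda+1)\gamma_+S]^{q-1}$ in place of your $\{\gamma_0\le 0\}$/$\{\gamma_0>0\}$ splitting; both yield $|\ddot S|=O\bigl((S_*-S)^{q-1}\bigr)$ with $q-1>-1$.
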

\begin{proof}
We first note that, from the proof of theorem \ref{thm:blowup_lambda<-1} we have $T_*<\infty$ and $0< \dot S(T_*) < \infty$. Thus, looking at the formulae for $\gamma$ and $\omega$ along the pathlines, \eqref{eq:sol_gamma_S}--\eqref{eq:sol_omega_S}, we see that the blowup depends on the factor $1 + (\lambda+1)\,\gamma_0(X_+,Y_+) \,S(t) = 1-S(t)/S_*$, which has a simple zero as a function of $t$ at $t=T_*$, because $S(t) \sim S_*-\dot S(T_*) (T_*-t)$. Thus, this factor will produce a blowup only if the initial position of the pathline is at a supremum of  $\gamma_0$. Regarding the blowup of $\gamma$, the only question remaining is whether the term $\ddot S$ is finite, or at least asymptotically smaller than the first term in \eqref{eq:sol_gamma_S}. From equation \eqref{eq:ddot_S} we see that if $-2 \leq \lambda < -1$ then the exponent $-\frac{\lambda+2}{\lambda+1}$ is nonnegative and thus $\ddot S(T_*)$ is finite. For $\lambda < -2$, this exponent 
is negative but we can bound:
$$\bigg[1 + (\lambda+1)\,\gamma_0(\cdot) \,S(t)\bigg]^{-\frac{\lambda+2}{\lambda+1}} \leq \bigg[1 + (\lambda+1)\,\gamma_+ \,S(t)\bigg]^{-\frac{\lambda+2}{\lambda+1}} = S_*^{\frac{\lambda+2}{\lambda+1}}(S_*-S(t))^{-\frac{\lambda+2}{\lambda+1}},$$
leading to $|\ddot S(t)| \leq c (S_*-S(t))^{-\frac{\lambda+2}{\lambda+1}}$, a bound that blows up. But the first term in \eqref{eq:sol_gamma_S} blows up as $(S_*-S(t))^{-1}$, which is asymptotically faster than $|\ddot S(t)|$ because \mbox{$-\frac{\lambda+2}{\lambda+1} > -1$.}
Thus, the blowup of $\gamma$ in \eqref{eq:sol_gamma_S} is dictated by the first term, with the desired result. As for the blowup of $\omega$ in \eqref{eq:sol_omega_S} at the pathline starting at $(X_+,Y_+)$, notice that \mbox{$\dot S(T_*) = {\gamma_0(X_+,Y_+)^{-2}}\left\langle \left[{\gamma_0(X_+,Y_+)} - {\gamma_0(\cdot)} \right]^{-\frac{1}{\lambda+1}}\right\rangle_{0}^{-2(\lambda+1)}$.} The desired result follows after direct algebraic manipulations. At any other pathline, $\omega$ remains finite.
\end{proof}

\subsection{Conditional blowup results for singularity time and asymptotics, in terms of the type of initial conditions}

For $-1 < \lambda$ the singularity time and blowup asymptotics depend on the initial condition for the stretching rate, specifically on the local profile of $\gamma_0$ near the positions $(X_+,Y_+)$ and $(X_-,Y_-)$ of definition \ref{def:sup_inf_pos}. For simplicity we consider only one type of initial condition.

\begin{definition}[Generic type of function]
{Let the function $f: \mathbb{T}^2 \to \mathbb{R}$ be nonzero, bounded and of zero mean. This function is said to be of generic type if it attains its supremum at a single point $(X_+,Y_+)$ and its  infimum at a single point $(X_-,Y_-)$, such that the matrix of second derivatives $D^2 f$ is negative definite at $(X_+,Y_+)$ and positive definite at $(X_-,Y_-)$.} In words, $f$ is of generic type if it has a unique global maximum and a unique global minimum, and its level contours near these points are asymptotically ellipses.
\end{definition}

Theorem \ref{thm:blowup_lambda<-1} already establishes a finite-time blowup if $\lambda < -1.$ For other values of $\lambda$ we have:

\begin{theorem}[Conditions on singularity time for generic initial conditions and $-1 < \lambda$]
\label{thm:blowup_conditions}
Let the initial condition $\gamma_0$ be nonzero, bounded on $\mathbb{T}^2$, of zero mean and of generic type. These statements follow:
\begin{enumerate}

 \item If $-1< \lambda \leq -1/2$ then the flow is regular at all times: $T_*=\infty$, and $ \lim_{t \to T_*} \dot S(t) = 0.$
 \item If $-1/2 < \lambda \leq 0$ then a singularity occurs at a finite time: $T_* < \infty$, and $\dot S(T_*) = 0.$
 \item If $0 < \lambda$ then $0 <  \lim_{t \to T_*} \dot S(t)$ and thus a singularity occurs at a finite time: $T_* < \infty.$
\end{enumerate}
\end{theorem}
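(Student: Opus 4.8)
The plan is to work entirely from the closed-form expression \eqref{eq:T_sing}, which reads $T_*=\int_0^{S_*} g(S)^{2(\lambda+1)}\,\mathrm{d}S$ with $g(S):=\bigl\langle [1+(\lambda+1)\gamma_0(\cdot)\,S]^{-1/(\lambda+1)}\bigr\rangle_0$, together with the identity $\dot S(T_*)=g(S_*)^{-2(\lambda+1)}$ read off from \eqref{eq:formula_S}, and lemma \ref{lem:general_blowup}. Since $S_*<\infty$ whenever $\lambda\neq-1$, the whole question reduces to the behaviour of $g(S)$ as $S\uparrow S_*$ and the integrability of $g(S)^{2(\lambda+1)}$ at the endpoint $S=S_*$; the limiting value of $\dot S$ is then immediate.

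First I would localise the singular part of $g$. Put $u:=S_*-S$. By the defining relation \eqref{eq:singularity_time_1} for $-1<\lambda$ (where $\lambda+1>0$) one has $1+(\lambda+1)\gamma_-S_*=0$, hence $1+(\lambda+1)\gamma_0S=u/S_*+(\lambda+1)(\gamma_0-\gamma_-)S\geq 0$. At $S=S_*$ this equals $(\lambda+1)(\gamma_0-\gamma_-)S_*$, which near $(X_+,Y_+)$ is at least $(\lambda+1)(\gamma_+-\gamma_-)S_*>0$ and, for a generic-type $\gamma_0$, stays bounded away from zero uniformly in $S\le S_*$ on the complement of any fixed ball $B_\delta(X_-,Y_-)$. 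Therefore the only part of $g(S)$ that can blow up as $u\to0^+$ comes from a shrinking neighbourhood of the minimiser $(X_-,Y_-)$, while the rest of the average over $\mathbb{T}^2$ is bounded and continuous up to $S=S_*$.

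Next comes the local computation near $(X_-,Y_-)$. Because $\gamma_0$ is of generic type, $D^2\gamma_0$ is positive definite there, so the Morse lemma provides coordinates $\xi\in\mathbb{R}^2$ on a neighbourhood in which $\gamma_0-\gamma_-=|\xi|^2$, with smooth non-vanishing Jacobian; in polar coordinates $\rho=|\xi|$ the singular part of $g(S)$ equals, up to bounded multiplicative (Jacobian) and additive (rest of $\mathbb{T}^2$ and non-quadratic remainder) corrections,
\[\int_0^{\delta}\bigl[\,u/S_*+(\lambda+1)S\rho^{2}\,\bigr]^{-\frac{1}{\lambda+1}}\rho\,\mathrm{d}\rho.\]
Substituting $v=\rho^2$ turns this into an elementary one-dimensional integral with three $u\to0^+$ regimes: it tends to a finite positive limit when $\frac{1}{\lambda+1}<1$, i.e.\ $\lambda>0$; it grows like $c\log(1/u)$ when $\lambda=0$; and it grows like $c\,u^{\,1-\frac{1}{\lambda+1}}=c\,u^{\,\lambda/(\lambda+1)}\to\infty$ when $-1<\lambda<0$. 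Raising to the power $2(\lambda+1)>0$ and integrating in $S=S_*-u$ near $u=0$: for $\lambda>0$, $\dot S(T_*)=g(S_*)^{-2(\lambda+1)}\in(0,\infty)$, so $T_*<\infty$ by lemma \ref{lem:general_blowup} (case (iii)); for $\lambda=0$ the integrand is $\sim c\log^2(1/u)$, integrable at $u=0$, so $T_*<\infty$, with $\dot S(T_*)=0$; for $-1<\lambda<0$ the integrand is $\sim c\,u^{2\lambda}$, integrable precisely when $2\lambda>-1$, with $\dot S(T_*)\sim c\,u^{-2\lambda}\to0$. Hence $-1/2<\lambda\le0$ yields $T_*<\infty$ and $\dot S(T_*)=0$ (case (ii)), while $-1<\lambda\le-1/2$ yields a divergent endpoint integral, $T_*=\infty$, and $\dot S(T_*)=0$ (case (i)); in the latter case $S(t)<S_*$ for every finite $t$, so by \eqref{eq:sol_gamma_S}--\eqref{eq:sol_omega_S} the fields stay finite and the flow is globally regular.

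The hard part will be making the localisation and the Morse reduction rigorous \emph{uniformly} in $u$ as $u\to0^+$: one must verify that replacing $\gamma_0-\gamma_-$ by the exact quadratic form, and discarding the contribution of $\mathbb{T}^2\setminus B_\delta(X_-,Y_-)$, produce only lower-order corrections that do not alter the leading power or logarithm, and that the $u\to0$ limit may be interchanged with the area average. This requires enough regularity of $\gamma_0$ near its extrema for the Morse lemma to apply and for $1+(\lambda+1)\gamma_0S$ to remain bounded away from zero off a small ball around $(X_-,Y_-)$ --- which is precisely what a ``generic type'' $\gamma_0$ is meant to guarantee. Once this is in place, the three-regime dichotomy above is routine.
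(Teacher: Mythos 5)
Your proposal is correct and follows essentially the same route as the paper: localise the average in \eqref{eq:formula_S}/\eqref{eq:T_sing} to a neighbourhood of the minimiser $(X_-,Y_-)$, expand $\gamma_0-\gamma_-$ quadratically there, and read off the three regimes ($g(S_*)$ finite for $\lambda>0$, logarithmic for $\lambda=0$, $u^{\lambda/(\lambda+1)}$ for $-1<\lambda<0$), which after raising to the power $2(\lambda+1)$ give exactly the paper's asymptotics \eqref{eq:asymptotics_S} and the integrability dichotomy at $2\lambda=-1$. The only difference is presentational — you phrase the reduction via the Morse lemma and work with the integrand of $T_*$ rather than with $\dot S$ directly, and you are more explicit about the uniformity of the localisation, which the paper compresses into ``after a rotation and rescaling.''
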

\begin{proof}
By looking at the RHS of \eref{eq:formula_S}, the exponent inside the area integral is negative, so the contribution to the singularity when $S\sim S_*$, if any, must come from the region in $\mathbb{T}^2$ that is near the global minimum $(X_-,Y_-)$ of definition \ref{def:sup_inf_pos}. After a rotation and rescaling we get:
\begin{equation}
\label{eq:dot_S_generic}
 \dot S(t) \approx \left[\frac{1}{2\pi \sqrt{\det D^2 \gamma_0(X_-,Y_-)}} \int_0^\epsilon (1-S(t)/S_* + (\lambda+1) S(t) r^2/2)^{-1/(\lambda+1)} r \mathrm{d}r\right]^{-2(\lambda+1)}\,,
\end{equation}
where $\epsilon>0$ is small but fixed and $D^2 \gamma_0(X_-,Y_-)$ is the matrix of second derivatives of $\gamma_0$ evaluated at the global minimum. We first consider the case $0<\lambda$. There, the integral over $r$ converges as $S\to S_*$ so we deduce $\dot S(T_*) > 0$ which by lemma \ref{lem:general_blowup} implies $T_*<\infty$, thus proving statement \emph{(iii)}.  Second, for $-1<\lambda \leq 0$ the integral is elementary, and keeping the leading term (at $r=0$) we get these asymptotic expressions as $S\to S_*$:
\begin{equation}
\label{eq:asymptotics_S}
 \dot S(t) \sim \begin{cases}
                     \left[4\pi^2 S_*^2 \lambda^2 \, \det D^2 \gamma_0(X_-,Y_-)\right]^{\lambda+1} \left(1-S(t)/S_*\right)^{-2\,\lambda}\,, & \quad -1 < \lambda < 0\,, \\
		    4\pi^2 S_*^2 \left[\det D^2 \gamma_0(X_-,Y_-)\right] \left(\log[1-S(t)/S_*]\right)^{-2}\,, & \quad \lambda = 0\,.
                   \end{cases}
\end{equation}
In all these cases $\dot S = 0$ at $S=S_*$. Recalling that $T_* = \int_0^{S_*} \dot S^{-1}\mathrm{d}S$, the above expressions can be used to determine if $T_*$ converges or not. Clearly, $T_* = \infty$ for $-1<\lambda \leq -1/2$, so statement \emph{(i)} is proven. In contrast, $T_* < \infty$ for $-1/2 < \lambda < 0$ and also for $\lambda=0$, thus statement \emph{(ii)} is proven.
\end{proof}

\begin{remark}
The statements of this theorem depend sensitively on the profile of $\gamma_0$ near the position of its infimum. For example, if one considers an initial condition such that $D^2\gamma_0(X_-,Y_-)$ has one positive eigenvalue and one zero eigenvalue, then by an elementary calculation the statements of theorem \ref{thm:blowup_conditions} get modified by simply changing the respective ranges of values of $\lambda$: in  (i) the new range is $-1 <\lambda \leq 0$; \mbox{in (ii) the new range} is $0<\lambda \leq 1$; in (iii) the new range is $1<\lambda$. Thus, the case of exact solution of 3D Euler equations, $\lambda=0$, suddenly becomes regular ($T_*=\infty$) for this type of initial condition.
\end{remark}

Theorem \ref{thm:asymptotics_uncond} shows unconditional blowup asymptotics for $\lambda < -1$. Theorem \ref{thm:blowup_conditions} shows there is no blow up with generic initial conditions for $-1 < \lambda \leq 1/2$. As for other values of $\lambda$ we have:

\begin{theorem}[Conditions on blowup asymptotics for generic initial conditions and $-1/2 < \lambda$]
\label{thm:conditions_-1/2<lambda}
Let $-1/2 < \lambda$. Let the initial condition $\gamma_0$ be nonzero, bounded on $\mathbb{T}^2$, of zero mean and of generic type.  Let $(X_-,Y_-)$ and $(X_+,Y_+)\in \mathbb{T}^2$ be the respective positions of the infimum $\gamma_-$ and supremum $\gamma_+$ of $\gamma_0$. Let $\Delta:=\det D^2 \gamma_0(X_-,Y_-)>0$ be the determinant of the matrix of second derivatives of $\gamma_0$ at the position of its infimum. Then the stretching rate $\gamma$ 
has the following asymptotic behaviour as $t \to T_*$:\\

\noindent (i) At the pathline with initial position $(X(0),Y(0)) = (X_-,Y_-)$, the field $\gamma$ blows up as
\begin{eqnarray}
\label{eq:inf_blowup}
\displaystyle \inf_{(x,y)\in \mathbb{T}^2} \gamma(x,y,t)  &\sim& 
\begin{cases}
-\frac{1}{2\lambda+1} \left(T_*-t\right)^{-1}, & -\frac{1}{2}<\lambda \leq 0\,,\\
-\frac{1}{\lambda+1} \left(T_*-t\right)^{-1}, & \quad 0<\lambda \,,
\end{cases}
\end{eqnarray}

\noindent (ii) At the pathline with initial position $(X(0),Y(0)) = (X_+,Y_+)$, the field $\gamma$ 
 blows up more mildly as:
\begin{eqnarray}
\label{eq:sup_blowup}
\displaystyle \sup_{(x,y)\in \mathbb{T}^2} \gamma(x,y,t) &\sim&
\begin{cases}
\frac{|\lambda|}{(\lambda+1)(2\,\lambda +1)} \left(T_*-t\right)^{-1}\,, &  -\frac{1}{2}<\lambda < 0\,,\\
\frac{1}{2\,\left|W_{-1}\left(-\pi \sqrt{S_* \Delta } \,(T_*-t)^{1/2}\right) \right|}  \left(T_*-t\right)^{-1}  \,, &   \lambda = 0\,,\\
 {\scriptstyle \frac{(\lambda+1)^{\frac{\lambda}{\lambda+1}}}{2\pi\sqrt{\Delta } \left\langle \left({\gamma_0(\cdot)} +|\gamma_-| \right)^{-\frac{1}{\lambda+1}}\right\rangle_{0}^{2\lambda+1}} } \left(T_*-t\right)^{-\frac{1}{\lambda+1}} , &  0<\lambda\,,
 \end{cases}
\end{eqnarray}
where $W_{-1}$ is the negative branch of the Lambert $W$ function.\\

\noindent (iii) At any pathline starting at neither $(X_-,Y_-)$ nor $(X_+,Y_+)$, the stretching blows up as in \eqref{eq:sup_blowup}. \\

\noindent (iv) The out-of-plane vorticity $\omega$ vanishes at the pathline with initial position $\mbf{X}(0) = (X_-,Y_-)$:
\begin{eqnarray}
\label{eq:vort_iv_-}
{\omega(\mbf{X}(t),t)} &\sim& 
{\omega_0(X_-,Y_-)}\times\begin{cases}
{\scriptstyle \left(2\pi |\lambda| (2\,\lambda+1)\sqrt{\Delta } \right)^{\frac{1}{2\,\lambda+1}} }\left(T_*-t\right)^{\frac{1}{2\,\lambda+1}}\,, &   -\frac{1}{2}<\lambda < 0\,,\\
 \frac{\pi \sqrt{\Delta }}{\left|W_{-1}\left(-\pi \sqrt{S_* \Delta } \,(T_*-t)^{1/2}\right) \right|} {\left(T_*-t\right)}\,, &   \lambda =0\,,\\
 \frac{{(\lambda+1)}^{\frac{1}{\lambda+1}}}{\left\langle \left({\gamma_0(\cdot)} + |\gamma_-| \right)^{-\frac{1}{\lambda+1}}\right\rangle_{0}}\,\left(T_*-t\right)^{\frac{1}{\lambda+1}}\,,& 0<\lambda\,,
\end{cases}
\end{eqnarray}
while, at any other pathline starting at $\mbf{X}_0 \neq (X_-, Y_-)$, it remains bounded if $0<\lambda$ and blows up if $-1/2<\lambda \leq 0$:
\begin{eqnarray}
\label{eq:vort_iv_+}
 \scriptstyle {\omega(\mbf{X}(t),t)}\sim\scriptstyle {\omega_0(\mbf{X}_0)}\times
\begin{cases}
\frac{\big((\lambda+1)(2\,\lambda+1) \left(|\gamma_-| + \gamma_0(\mbf{X}_0)\right)\big)^{\frac{1}{\lambda+1}} }{\left(2\pi |\lambda| (2\,\lambda+1)\sqrt{\Delta } \right)^{\frac{1}{2\,\lambda+1}}}  \left(T_*-t\right)^{-\frac{|\lambda|}{(\lambda+1)(2\,\lambda+1)}}\,, &  \scriptstyle -\frac{1}{2}<\lambda <0\,,\\
\frac{\left(|\gamma_-| + \gamma_0(\mbf{X}_0)\right)}{\pi \sqrt{\Delta }}  { \left|W_{-1}\left(-\pi \sqrt{S_* \Delta} \,(T_*-t)^{1/2}\right) \right|}\,, &  \scriptstyle \lambda =0\,.
\end{cases}
\end{eqnarray}
\end{theorem}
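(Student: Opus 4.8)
The plan is to reduce the whole statement to the asymptotic behaviour of $S(t)$, $\dot S(t)$ and $\ddot S(t)$ as $t\to T_*$, and then substitute into the closed-form expressions \eqref{eq:sol_gamma_S}--\eqref{eq:sol_omega_S} for $\gamma$ and $\omega$ along pathlines. The analytic engine is a Laplace/Morse-type evaluation of the two averages over $\mathbb{T}^2$ that appear in \eqref{eq:formula_S} and in the formula \eqref{eq:ddot_S} for $\ddot S$. Since $\gamma_0$ is of generic type, near its unique infimum $(X_-,Y_-)$ a linear change of the in-plane coordinates diagonalising and normalising $D^2\gamma_0(X_-,Y_-)$ turns the local model into $\gamma_0\approx\gamma_-+\frac{1}{2}r^2$ with area element $r\,\mathrm{d}r\,\mathrm{d}\theta/\sqrt{\Delta}$, $\Delta:=\det D^2\gamma_0(X_-,Y_-)>0$, which is exactly the reduction \eqref{eq:dot_S_generic}; on the complement of a small disc the integrands stay bounded as $S\to S_*$ because there $1+(\lambda+1)\gamma_0 S_*>0$, so that piece contributes only a regular remainder. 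Carrying out the resulting elementary radial integrals refines \eqref{eq:asymptotics_S}: writing $u(t):=1-S(t)/S_*=1+(\lambda+1)\gamma_- S(t)$, one gets $\dot S\sim[4\pi^2 S_*^2\lambda^2\Delta]^{\lambda+1}u^{-2\lambda}$ for $-1<\lambda<0$, $\dot S\sim 4\pi^2 S_*^2\Delta\,(\log u)^{-2}$ for $\lambda=0$, and $\dot S(T_*)=[(\lambda+1)S_*]^2\langle(\gamma_0(\cdot)+|\gamma_-|)^{-1/(\lambda+1)}\rangle_0^{-2(\lambda+1)}\in(0,\infty)$ for $0<\lambda$; the same local analysis applied to \eqref{eq:ddot_S} shows its average behaves like $\gamma_-\,(2\pi\sqrt{\Delta}\,S_*)^{-1}u^{-1/(\lambda+1)}$ as $S\to S_*$.

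Next I would integrate these $\dot S$-laws to recover $u(t)$ and then differentiate to obtain $\ddot S(t)$. For $-1<\lambda<0$, separation of variables in the leading-order equation $\dot u\propto-u^{-2\lambda}$ gives $u(t)\sim c\,(T_*-t)^{1/(2\lambda+1)}$ (this is where $\lambda>-\frac{1}{2}$ is essential, so that $2\lambda+1>0$) and, from \eqref{eq:ddot_S}, $\ddot S/\dot S\sim 2\lambda\,[(2\lambda+1)(T_*-t)]^{-1}$; for $0<\lambda$, $\dot S(T_*)>0$ yields $u(t)\sim(\dot S(T_*)/S_*)(T_*-t)$ and $\ddot S/\dot S\sim\mathrm{const}\cdot(T_*-t)^{-1/(\lambda+1)}$. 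The borderline $\lambda=0$ is where the Lambert $W$ function enters: $\dot u\propto-(\log u)^{-2}$ integrates to $u(\log u)^2\sim 4\pi^2 S_*\Delta\,(T_*-t)$; substituting $u=e^{-v}$ this reads $\bigl(\frac{v}{2}e^{-v/2}\bigr)^2\sim\pi^2 S_*\Delta\,(T_*-t)$, whose large-positive-$v$ root is $v=-\log u=2\,\bigl|W_{-1}\bigl(-\pi\sqrt{S_*\Delta}\,(T_*-t)^{1/2}\bigr)\bigr|$, and differentiating then gives $\ddot S/\dot S\sim-\bigl[(T_*-t)\,|W_{-1}(-\pi\sqrt{S_*\Delta}(T_*-t)^{1/2})|\bigr]^{-1}$.

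With these in hand the theorem follows by careful bookkeeping in \eqref{eq:sol_gamma_S}--\eqref{eq:sol_omega_S}. At the pathline from $(X_-,Y_-)$ one has $1+(\lambda+1)\gamma_0 S(t)=u$, so the first term of \eqref{eq:sol_gamma_S} is $\dot S\gamma_-/u$ and the second $-\ddot S/[2(\lambda+1)\dot S]$; using $\gamma_-=-[(\lambda+1)S_*]^{-1}$, these are of the same order for $-\frac{1}{2}<\lambda\le0$ and combine to the universal $-\frac{1}{2\lambda+1}(T_*-t)^{-1}$ (all dependence on $\Delta$ cancelling), whereas for $0<\lambda$ the first term alone gives $\gamma_- S_*(T_*-t)^{-1}=-\frac{1}{\lambda+1}(T_*-t)^{-1}$ and the second is subleading since $-1/(\lambda+1)>-1$; this is (i). For (ii)--(iii): at $(X_+,Y_+)$ and at any pathline off the infimum one has $1+(\lambda+1)\gamma_0 S_*>0$, so the first term of \eqref{eq:sol_gamma_S} is bounded (and $\to0$ when $\dot S(T_*)=0$) and $\gamma$ is governed by $-\ddot S/[2(\lambda+1)\dot S]$; inserting the three $\ddot S/\dot S$-laws and simplifying with $|\gamma_-|=[(\lambda+1)S_*]^{-1}$ and the value of $\dot S(T_*)$ yields \eqref{eq:sup_blowup}, including the Lambert-$W$ prefactor at $\lambda=0$. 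For (iv) I substitute into \eqref{eq:sol_omega_S}: at $(X_-,Y_-)$ one has $\omega/\omega_0(X_-,Y_-)=[u/\dot S^{1/2}]^{1/(\lambda+1)}$, and plugging in $u,\dot S$ from the three regimes produces the exponents $\frac{1}{2\lambda+1}$, $1$ (times the $W_{-1}$ factor) and $\frac{1}{\lambda+1}$; at a pathline off $(X_-,Y_-)$ the numerator tends to the positive constant $(\lambda+1)S_*\,(|\gamma_-|+\gamma_0(\mbf{X}_0))$, so $\omega$ stays bounded when $\dot S(T_*)>0$ ($0<\lambda$) and blows up like $\dot S^{-1/(2(\lambda+1))}\sim(T_*-t)^{-|\lambda|/[(\lambda+1)(2\lambda+1)]}$ when $\dot S(T_*)=0$ ($-\frac{1}{2}<\lambda\le0$), the displayed constants again coming from the $u$-asymptotics there. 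The main obstacle is keeping the two borderline phenomena straight: the exact cancellation at $(X_-,Y_-)$ that renders the $(T_*-t)^{-1}$ blowup coefficient of $\gamma$ universal, and the $\lambda=0$ regime, where one must carry the implicit relation $u(\log u)^2\sim\mathrm{const}\,(T_*-t)$ and express all corrections uniformly through $W_{-1}$ rather than iterated logarithms; in both cases the delicate point is to identify which term in \eqref{eq:sol_gamma_S}--\eqref{eq:sol_omega_S} dominates without discarding subleading pieces prematurely.
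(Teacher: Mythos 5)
Your proposal is correct and its analytic core coincides with the paper's: the same Laplace/Morse-type localisation of the averages near $(X_-,Y_-)$ giving \eqref{eq:dot_S_generic} and \eqref{eq:asymptotics_S}, the same integration to $S_*-S(t)$ in \eqref{eq:asymptotics_S_of_t} including the Lambert-$W$ inversion at $\lambda=0$, and the same endgame substitution into \eqref{eq:sol_gamma_S}--\eqref{eq:sol_omega_S}. The one genuine organisational difference is in how (i)--(iii) are extracted: the paper proves (iv) first and then obtains the $\gamma$-asymptotics by the formal identity $\gamma=\tfrac{d}{dt}\ln|\omega|$ along pathlines (which the paper itself flags as a formal shortcut), whereas you compute $\ddot S/\dot S$ from the integrated $\dot S$-laws and feed it directly into \eqref{eq:sol_gamma_S}. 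Your route is the ``full proof'' the paper alludes to: it avoids having to justify differentiating an asymptotic relation, at the modest cost of carrying out by hand the cancellation at $(X_-,Y_-)$ that produces the universal coefficient $-\tfrac{1}{2\lambda+1}$ for $-\tfrac12<\lambda\le 0$ --- which you do correctly, including the verification that the two terms of \eqref{eq:sol_gamma_S} are of the same order there and that the $\Delta$-dependence drops out.
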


The proof is straightforward but long and is provided in the appendix.

\begin{remark}
\label{rem:blowup_and_zeroes}
Regarding this theorem, the blowup of the infimum of the stretching rate $\gamma$ is always accompanied by a collapse of the out-of-plane vorticity $\omega$ to zero, thus providing an example of theorem \ref{thm:conservation_of_zeroes} on the conservation of zeroes of the symmetry $\etab_1$ defined in \eqref{eq:symm_vort}. In contrast, the blowup of the supremum of the stretching rate $\gamma$ is not always accompanied by a blowup of $\omega$, as the case $0<\lambda$ shows.
\end{remark}

\subsection{Remarks on the spatial structure of the blowup}

The blowup of the infimum of the stretching rate $\gamma$ is stronger than the blowup of its supremum. In fact, for the exact 3D Euler solution case $\lambda=0$, the supremum blowup is marginal: the denominator is going to infinity logarithmically, making this blowup milder than the infimum blowup. In the case $0<\lambda$ the exponent in the supremum blowup is always smaller than in the infimum blowup. In the case $-1/2<\lambda<0$ the exponent is the same for both infimum and supremum, but the prefactor is always larger for the infimum blowup. 

Nevertheless, when considering the spatial structure of the blowup, the value of the stretching rate does not matter as much as the spatial profile of the blowup region.  Because $\gamma =- \nabla_{\mathrm{h}} \cdot\mbf{u}_{\mathrm{h}}$, considering the 2D flow on the $z=0$ plane an interesting pattern emerges: a blowup of the infimum of $\gamma$ corresponds to a positive-divergence flow, where the particles are repelled from the singular point. This makes such singularities ``thick'', and therefore amenable to numerical simulations, and when considering the third dimension, the flow is towards the plane, and as the out-of-plane vorticity tends to zero, the helicity is negligible near these singularities. See \cite{mulungye2016lambda_0, Ohkitani20003181, constantin2000euler} for numerical and analytical studies of the case $\lambda=0$, dominated by a blowup of the \mbox{infimum of $\gamma$.} In contrast, a blowup of the supremum of $\gamma$ corresponds to a negative-divergence flow, where the particles are attracted towards the singular point. These singularities are therefore ``thin'' and difficult to resolve numerically unless an adaptive spatial scheme is used. See \cite{mulungye2015lambda_m3/2} for a numerical and analytical study of the case $\lambda=-3/2$, dominated by a blowup of the supremum of $\gamma$. When considering the third dimension, the flow is a thin jet escaping the plane, and since both vorticity and velocity are large, the local helicity is singular there.

\section{Conclusion and discussion}
\label{sec:5}
In the first part of this paper the continuous symmetry approach is presented in a modern and practical way, and is applied with versatility to the task of finding the solution of 3D Euler fluid equations (and related models) along pathlines via the construction of constants of motion that depend on the velocity field, either locally or non-locally. In the second part of this paper, the solutions found are pursued further and conditions on finite-time blowup of the main fields of interest (out-of-plane vorticity and its stretching rate) are found and fully discussed. Our model serves as a platform to understand the complicated structure that might be encountered in nearly singular solutions of 3D Euler equations.

In future work we will extend the applications of the continuous symmetry approach to 3D Euler fluid models beyond the symmetry plane and to systems involving more interacting fields, such as magnetohydrodynamics. Moreover, we will combine the continuous symmetry approach with the current Cauchy invariants lore to generate a hierarchy of new Cauchy invariants, as theorem \ref{thm:construct_symm} is in fact a particular case of the result that the Lie derivative of any Lie-conserved tensor along a symmetry is a new Lie-conserved tensor, which is exact if the original one is.

\vskip6pt

\enlargethispage{20pt}






\ack{This paper is dedicated to the memory of Professor Charles R. Doering.}


\appendix

\section{Proof of theorem \ref{thm:conditions_-1/2<lambda}}

\begin{proof}[Proof of theorem \ref{thm:conditions_-1/2<lambda}]
The main ideas for this proof are: first, the terms in equations \eqref{eq:sol_gamma_S}--\eqref{eq:sol_omega_S} that can show singular behaviour are: the factor $1+(\lambda+1) \gamma_0(X_0,Y_0)S(t)$, which goes to zero at $t=T_*$ iff the pathline starts at $(X_-,Y_-)$; the factor $\dot S(t)$, which may go to zero at $t=T_*$; and the factor $\ddot S(t)$, which may go to infinity at $t=T_*$. Second, notice that equations \eqref{eq:sol_gamma_S}--\eqref{eq:sol_omega_S} verify $\gamma = \frac{d \ln |\omega|}{dt}$ along pathlines, so one can prove \emph{(iv)} first by inserting the asymptotic solutions into \eqref{eq:sol_omega_S} and then prove \emph{(i)--(iii)} by direct logarithmic differentiation (this is a formal proof, but is equivalent to a full proof in our case and far more efficient). Third, the supremum and infimum of $\gamma$ correspond to the pathlines starting at the supremum and infimum of $\gamma_0$ because the level sets of $\gamma$ maintain their order (see the short explanation after equation \eqref{eq:sol_omega_S}). We now turn  to prove \emph{(iv)}. We evaluate from \eqref{eq:sol_omega_S} the vorticity at the pathline starting at $(X_-,Y_-)$: $\omega(t) = \omega(0) \left(\frac{1-S/S_*}{{\dot S}^{1/2}}\right)^{\frac{1}{\lambda+1}}$, valid for $-1<\lambda$. When $-1/2 <\lambda\leq 0$, both numerator and denominator in this formula go to zero as $t\to T_*$. 
%
We can integrate equations \eqref{eq:asymptotics_S} and find:
\begin{equation}
\label{eq:asymptotics_S_of_t}
 S_*- S(t) \sim \begin{cases}
                     S_*^2\,\left[4\pi^2 \lambda^2 \Delta \right]^{\frac{\lambda+1}{2\,\lambda+1}}  \,\left(2\,\lambda+1\right)^{\frac{1}{2\,\lambda+1}}\,(T_*-t)^{\frac{1}{2\,\lambda+1}}\,, & \quad -1/2 < \lambda < 0\,, \\
		   S_* \exp \left[2\, W_{-1}\left(-\pi \sqrt{S_* \Delta } \,(T_*-t)^{1/2}\right)\right]\,, & \quad \lambda = 0\,,
                   \end{cases}
\end{equation}
where $W_{-1}$ is the negative branch of the Lambert $W$ function, solution of $W_{-1}(z)\mathrm{e}^{W_{-1}(z)} = z$ defined for small $z<0$ and such that $\lim_{z\to 0^-}W_{-1}(z) = -\infty$. Using these and equations \eqref{eq:asymptotics_S} it is direct to calculate the required ratio when $-1/2 < \lambda < 0$, thus proving the first line in \eqref{eq:vort_iv_-}. When $\lambda=0$ we have, defining $z=-\pi \sqrt{S_* \Delta } \,(T_*-t)^{1/2}$,
$$\frac{1-S/S_*}{{\dot S}^{1/2}} \sim \frac{\mathrm{e}^{2\, W_{-1}\left(z\right)}}{2\pi S_* \sqrt{\Delta} \left|\log(1-S(t)/S_*)\right|^{-1}} \sim  \frac{|W_{-1}\left(z\right)|\,\mathrm{e}^{2\,W_{-1}\left(z\right)}}{\pi S_* \sqrt{\Delta}} =   \frac{z^{2}}{\pi S_* \sqrt{\Delta} \,|W_{-1}\left(z\right)|},$$
which proves the second line in \eqref{eq:vort_iv_-}. When $0 <\lambda$ the denominator $\dot S^{1/2}$ is finite at $T_*$, as shown in the proof of theorem \ref{thm:blowup_lambda<-1}. The numerator can be approximated to lowest order and we get
$$\frac{1-S/S_*}{{\dot S}^{1/2}} \sim \frac{\dot S(T_*)}{S_* \dot S(T_*)^{1/2}}(T_*-t) = \frac{\dot S(T_*)^{1/2}}{S_*}(T_*-t),$$
and from \eqref{eq:formula_S} we get \mbox{$\dot S(T_*) = {\gamma_-^{-2}}\left\langle \left({\gamma_0(\cdot)} + {|\gamma_-|} \right)^{-\frac{1}{\lambda+1}}\right\rangle_{0}^{-2(\lambda+1)}$.} Recalling $S_*^{-1}=(\lambda+1) |\gamma_-|$ and simplifying leads to the third line in  \eqref{eq:vort_iv_-}.  Now, we consider the vorticity at a pathline starting at $\mbf{X}_0 \neq (X_-,Y_-)$: from \eqref{eq:sol_omega_S} we get $\omega(t) = \omega(0) \left(\frac{1+(\lambda+1) \gamma_0(\mbf{X}_0) S}{{\dot S}^{1/2}}\right)^{\frac{1}{\lambda+1}}$, and now the numerator does not go to zero as $t \to T_*$. The ratio will blowup if $\dot S$ tends to zero as $t \to T_*$. Thus, when $0 < \lambda$ the vorticity remains finite as $\dot S(T_*)$ is finite. But when $-1/2<\lambda\leq 0$ we have blowup:
$$\frac{1+(\lambda+1) \gamma_0(\mbf{X}_0) S}{{\dot S}^{1/2}} \sim \frac{1+(\lambda+1) \gamma_0(\mbf{X}_0) S_*}{{\dot S}^{1/2}} = \frac{|\gamma_-| + \gamma_0(\mbf{X}_0)}{|\gamma_-| {\dot S}^{1/2}},$$
which after using Eqs.~\eqref{eq:asymptotics_S} and \eqref{eq:asymptotics_S_of_t} gives directly the first line in \eqref{eq:vort_iv_+} when \mbox{$-1/2<\lambda<0$.} When $\lambda=0$ we get, in terms of the $z$ variable defined earlier,
$$\frac{|\gamma_-| + \gamma_0(\mbf{X}_0)}{|\gamma_-| {\dot S}^{1/2}} \sim \frac{|\gamma_-| + \gamma_0(\mbf{X}_0)}{2\pi S_* \sqrt{\Delta} |\gamma_-|  \left|\log(1-S(t)/S_*)\right|^{-1}} \sim \frac{(|\gamma_-| + \gamma_0(\mbf{X}_0)) |W_{-1}\left(z\right)|}{\pi \sqrt{\Delta}},$$
thus proving the second line in \eqref{eq:vort_iv_+}. This concludes the proof of \emph{(iv)}.

We now turn to the proof of \emph{(i)}, which concerns the pathline starting at $(X_-,Y_-)$. From logarithmic differentiation of \eqref{eq:vort_iv_-}, we get
$$\gamma(\mbf{X}(t),t) = \frac{d}{dt}[\ln\omega(\mbf{X}(t),t)] \sim \begin{cases}
\frac{d}{dt} \ln \left(T_*-t\right)^{\frac{1}{2\,\lambda+1}}\,, &   -\frac{1}{2}<\lambda < 0\,,\\
 \scriptstyle - \frac{d}{dt} \ln{\left|W_{-1}\left(-\pi \sqrt{S_* \Delta } \,(T_*-t)^{1/2}\right) \right|} + \frac{d}{dt} \ln{\left(T_*-t\right)}\,, &   \lambda =0\,,\\
  \frac{d}{dt} \ln\left(T_*-t\right)^{\frac{1}{\lambda+1}}\,,& 0<\lambda\,,
\end{cases}$$
which leads to \eqref{eq:inf_blowup} as the last term in the $\lambda=0$ case above dominates over the first term.

We now turn to the proof of \emph{(ii)--(iii)}, corresponding to pathlines starting at $\mbf{X}_0 \neq (X_-,Y_-)$. As explained earlier, the asymptotic behaviour of $\gamma$ along these pathlines is similar because, from equation \eqref{eq:sol_gamma_S} the first term is bounded while the second term, being proportional to $\ddot S / \dot S$ and thus independent of the initial point (thus ensuring \emph{(iii)} follows from \emph{(ii)}), blows up because $\ddot S$ blows up (and moreover, when $-1/2<\lambda\leq 0$, because $\dot S$ tends to zero). As the level sets of $\gamma$ keep their ordering, the supremum of $\gamma$ is always attained at the pathline starting at $(X_+,Y_+)$. When $-1/2<\lambda\leq 0$, logarithmic differentiation of \eqref{eq:vort_iv_+} gives
$$\gamma(\mbf{X}(t),t) = \frac{d}{dt}[\ln\omega(\mbf{X}(t),t)] \sim \begin{cases}
\frac{d}{dt} \ln \left(T_*-t\right)^{-\frac{|\lambda|}{(\lambda+1)(2\,\lambda+1)}}\,, &   -\frac{1}{2}<\lambda < 0\,,\\
 \scriptstyle \frac{d}{dt} \ln{\left|W_{-1}\left(-\pi \sqrt{S_* \Delta } \,(T_*-t)^{1/2}\right) \right|} \,, &   \lambda =0\,,
\end{cases}$$
proving directly the first line in  \eqref{eq:sup_blowup}, namely when $-1/2 < \lambda <0$. The $\lambda=0$ case gives, using the identity $\mathrm{d} \ln |W_{-1}(z)| = (1+W_{-1}(z))^{-1} \mathrm{d}\ln |z|$,  
$$\lambda=0: \qquad \gamma(\mbf{X}(t),t) \sim \frac{1}{1+W_{-1}(z)} \frac{d}{dt}\ln |z| \sim \frac{-\frac{1}{2}(T_*-t)^{-1}}{W_{-1}(z)},$$
giving the second line in \eqref{eq:sup_blowup}.  Finally, the case $0<\lambda$ requires the calculation of $\ddot S$ asymptotically, which blows up, while $\dot S(T_*)$ is finite, so $\gamma(\mbf{X}(t),t) \sim -\frac{1}{2(\lambda+1)} \frac{\ddot S(t)}{\dot S(T_*)}$. Now, from \eqref{eq:ddot_S} for $0<\lambda$ evaluated near the singularity we get, following a similar calculation as the one leading to \eqref{eq:dot_S_generic}, 
$$\ddot S \sim \frac{(\lambda+1)\gamma_-}{\pi \sqrt{\Delta}} (\dot S(T_*))^{2+\frac{1}{2(\lambda+1)}}\int_0^\epsilon (1-S(t)/S_* + (\lambda+1) S_* r^2/2)^{-\frac{\lambda+2}{\lambda+1}} r \mathrm{d}r\,,$$
where $\epsilon$ is small but fixed. The integral is elementary and the term at $r=0$ is divergent as $t\to T_*$ so it dominates, and we get
$\ddot S \sim \frac{(\lambda+1)\gamma_-}{\pi S_* \sqrt{\Delta}} (\dot S(T_*))^{2+\frac{1}{2(\lambda+1)}} \left(1-S(t)/S_*\right)^{-\frac{1}{\lambda+1}}.$ Writing to lowest order $1-S(t)/S_* \sim \frac{\dot S(T_*)}{S_*} (T_*-t)$ and
recalling that \mbox{$\dot S(T_*) = {\gamma_-^{-2}}\left\langle \left({\gamma_0(\cdot)} + {|\gamma_-|} \right)^{-\frac{1}{\lambda+1}}\right\rangle_{0}^{-2(\lambda+1)}$} and $S_*^{-1}=(\lambda+1)|\gamma_-|$ we get the third line in \eqref{eq:sup_blowup}, thus ending the proof.
\end{proof}

\bibliographystyle{rsta}
\bibliography{bibli_Cont_Symm}

\end{document}